\newtheorem{theorem}{Theorem}[section]
\newtheorem{lemma}[theorem]{Lemma}
\newtheorem{proposition}[theorem]{Proposition}
\newtheorem{remark}[theorem]{Remark}
\DeclareMathOperator{\ind}{ind}
\DeclareMathOperator{\wn}{Wind}
\DeclareMathOperator{\Trace}{Tr}
\DeclareMathOperator{\trace}{tr}
\DeclareMathOperator{\Ln}{Ln}
\DeclareMathOperator{\CAP}{CAP}
\DeclareMathOperator{\APS}{APS}
\DeclareMathOperator{\APL}{APL}
\DeclareMathOperator{\Bohr}{B}
\DeclareMathOperator{\discrete}{discrete}
\DeclareMathOperator{\ap}{a.p.}
\DeclareMathOperator{\Index}{Index}
\def\C{\mathbb C}
\def\N{\mathbb N}
\def\S{\mathbb S}
\def\R{\mathbb R}
\def\Z{\mathbb Z}
\def\A{\mathcal A}
\def\B{\mathcal B}
\def\d{\mathrm d}
\def\D{\mathcal D}
\def\e{\mathrm e}
\def\E{\mathcal E}
\def\F{\mathscr F}
\def\H{\mathcal H}
\def\I{\mathcal I}
\def\J{\mathcal J}
\def\K{\mathcal K}
\def\SS{\mathcal S}
\def\T{\mathbb T}
\def\U{\mathcal U}
\def\W{\mathcal W}
\def\one{I}
\def\ie{{\it i.e.}}
\begin{document}

\title{Index theorems for Fredholm, semi-Fredholm, and almost periodic operators: all in one example}

\author{H. Inoue, S. Richard\footnote{Supported by the grant
\emph{Topological invariants through scattering theory and noncommutative geometry} from Nagoya University,
and on leave of absence from Univ.~Lyon, Universit\'e Claude Bernard Lyon
1, CNRS UMR 5208, Institut Camille Jordan, 43 blvd.~du 11 novembre 1918, F-69622
Villeurbanne cedex, France}}

\date{\small}
\maketitle \vspace{-1cm}

\begin{quote}
\emph{
\begin{itemize}
\item[] Graduate school of mathematics, Nagoya University,
Chikusa-ku, Nagoya 464-8602, Japan
\item[] \emph{E-mails:} inoue.hideki124@gmail.com, richard@math.nagoya-u.ac.jp
\end{itemize}
}
\end{quote}

\begin{abstract}
Based on operators borrowed from scattering theory,
several concrete realizations of index theorems are proposed.
The corresponding operators belong to some $C^*$-algebras of pseudo-differential operators
with coefficients which either have limits at
$\pm\infty$, or which are periodic or asymptotically periodic, or which are uniformly almost periodic.
These various situations can be deduced from a single partial isometry which depends on several parameters.
All computations are explicitly performed.
\end{abstract}

\textbf{2010 Mathematics Subject Classification:}  46L89, 81R60
\smallskip

\textbf{Keywords:} Index theorem, Fredholm, semi-Fredholm, almost periodic, scattering theory.

\section{Introduction}

Levinson's theorem is a relation between the number of bound states of a quantum mechanical system
and an expression related to the scattering part of that system.
It was originally established by N. Levinson in \cite{Lev} for a Schr\"odinger operator with a spherically symmetric potential,
and has then been developed by numerous researchers on a purely analytical basis.
The interpretation of this relation as an index theorem in scattering theory appeared for the first time
in the conference paper \cite{KRr} or on an explicit model in \cite{KR}.
Ten years later, these investigations on a topological approach of Levinson's theorem have been summarized in the review paper \cite{R}
to which we refer for many concrete examples and earlier references on the subject.

The common feature of all these investigations is that the number of eigenvalues (or point spectrum) of the underlying operator
is finite. At the operator level it means that the key role is played by an isometry which is also a Fredholm operator,
and at the algebraic level it implies that the Toeplitz extensions of $C(\S)$ by the set of compact operators provides the necessary framework.
Even though these ingredients look rather simple, let us stress that the main difficulty in these investigations comes from the analytical part,
namely the affiliation to a certain $C^*$-algebra of an operator defined only by a strong limit.

Now, the aim and the content of the present paper are radically different. Based on some analysis which has already been performed in \cite{DR}
and which is recalled in Section \ref{sec_model}, our starting point is an explicit formula for a partial isometry in $L^2(\R)$ which depends
on several parameters. This isometry is a special instance of a pseudo-differential operator of degree $0$ with values
in some prescribed subalgebra of $C_{\rm b}(\R)$. By tuning the parameters, this algebra is either made of functions which have limits at
$\pm\infty$, or which are periodic or asymptotically periodic, or which are uniformly almost periodic.
Accordingly, this isometry is either Fredholm, or semi-Fredholm, or with both an infinite dimensional kernel and cokernel.
In each of these situations, we provide an index theorem which corresponds to an extension of Levinson's theorem when
an infinite number of eigenvalues is involved.

Independently of their interpretations in the framework of scattering theory, our results can be seen as concrete realizations
of some famous index theorems developed in different contexts.
In short, we provide examples of index theorems in the framework of a comparison algebra \emph{\`a la} Cordes \cite[Chap.~V]{Cordes_LNM},
in the framework of an algebra of asymptotically periodic pseudo-differential operator with a $2$-link ideal chain \cite{CM},
or in the framework of an algebra of almost periodic pseudo-differential operators \cite{CDSS,CMS,Shubin1}.
We also stress that all these incarnations are based on a single formula for a wave operator of a rather simple model.

Let us now be more precise on the model and on the results, see also Section \ref{sec_model} for more details on the model.
The initial system consists in a family of Schr\"odinger operators on the half-line $\R_+$ with a potential of the form $1/r^2$.
A parameter $m\in \C$ with $\Re(m)>-1$ is used for describing the coupling constant for the potential, and a parameter $\kappa\in \C$ is used
for defining the boundary condition at $r=0$.
The study of the corresponding family of closed operators $H_{m,\kappa}$ in $L^2(\R_+)$ has been performed in \cite{DR}.
For the pairs of parameters $(m,\kappa)$ exhibited in \eqref{def_Omega} the operators $H_{m,\kappa}$ are self-adjoint;
we shall stick to this family in the sequel.

Given two operators $H_{m,\kappa}$ and $H_{m',\kappa'}$ the M\o ller wave operators $W_{m,\kappa;m',\kappa'}^\mp$
have been explicitly computed. These operators are partial isometries which intertwine the two initial operators, namely
the following equalities hold:
$$
W^{\mp}_{m,\kappa;m',\kappa'}H_{m',\kappa'}=H_{m,\kappa}W_{m,\kappa;m',\kappa'}^{\mp}.
$$
Note that these operators are at the root of scattering theory, and that they allow
a deep investigation on the operator $H_{m,\kappa}$ when a simpler comparison operator
$H_{m',\kappa'}$ is chosen.
More importantly for us, the kernel of these operators correspond to the subspace spanned by the eigenfunctions of $H_{m',\kappa'}$
while their cokernel coincide with the subspace spanned by the eigenfunctions of the operator $H_{m,\kappa}$.

For this model, the explicit formulas for the wave operators involve the $\Gamma$-function.
For simplicity we shall provide a representation of these operators in $L^2(\R)$ instead of $L^2(\R_+)$.
For that purpose let us introduce the bounded and continuous
function $\Xi_m:\R\to \C$ defined for $\xi\in \R$ by
\begin{equation}\label{eq_Xi}
\Xi_m(\xi):=\e^{i\ln(2)\xi}\frac{\Gamma\big(\frac{m+1+i\xi}{2}\big)}{\Gamma\big(\frac{m+1-i\xi}{2}\big)}.
\end{equation}
Then, we show in Proposition \ref{prop_unit_eq} that the operators $W^{\mp}_{m,\kappa;m',\kappa'}$ are
unitarily equivalent to the following expressions in $L^2(\R)$:
\begin{align}\label{expression in line}
\W^{\mp}_{m,\kappa;m',\kappa'} = & \Xi_\frac{1}{2}(-D)\big(\Xi_m(D)-\varsigma\Xi_{-m}(D)\e^{2mX}\big) \frac{\e^{\mp i\frac{\pi}{2}m}}{1-\varsigma \e^{\mp i\pi m}\e^{2mX}}\notag \\
& \quad \times \frac{\e^{\pm i\frac{\pi}{2}m'}}{1-\varsigma' \e^{\pm i\pi m'}\e^{2m'X}} \big(\Xi_{m'}(-D)-\varsigma'\e^{2m'X}\Xi_{-m'}(-D)\big)\Xi_\frac{1}{2}(D),
\end{align}
where $X$ and $D:=-i\partial_x$ denote the usual self-adjoint operators of multiplication and differentiation in $L^2(\R)$.
The constant $\varsigma$ is defined by $\varsigma\equiv \varsigma(m,\kappa):=\kappa\frac{\Gamma(-m)}{\Gamma(m)}$, and accordingly
$\varsigma':=\kappa'\frac{\Gamma(-m')}{\Gamma(m')}$. Note that for shortness the cases $m,m'$ equal to $0$ are not considered here,
but were treated in \cite{DR}.

The next step consists in studying this partial isometry as a function of the parameters $(m,\kappa)$ and $(m',\kappa')$.
In the recent paper \cite{NPR}, a special case was considered, namely when $H_{m,\kappa}$ is a closed operator (not self-adjoint in general)
with a finite number of eigenvalues, and when $H_{\frac{1}{2},0}$ is chosen for $H_{m',\kappa'}$. This latter operator corresponds
to the Dirichlet Laplacian on $\R_+$, and the resulting operator $W^-_{m,\kappa;\frac{1}{2},0}$ is Fredholm, but not a partial
isometry in general. The interest is this special setting was to extend Levinson's theorem to non-self-adjoint operators
and to complex eigenvalues.
In Section \ref{sec_Fredholm} we briefly recall part of these results but only in the self-adjoint case.
However, we put a special emphasize on the $C^*$-algebra which is similar to some comparison algebras introduced in \cite[Chap.~V]{Cordes_LNM}.
In our case, the quotient of this algebra by its commutator ideal corresponds to $C(\triangle)$,
the algebra of continuous functions on the boundary of a triangle.
In this setting, our index theorem is a relation between the winding number of a function defined on $\triangle$
and the index of the operator $\W^{-}_{m,\kappa;\frac{1}{2},0}$, see Theorem \ref{index theorem 3}.

In Section \ref{sec_semi_I} we concentrate on a set of parameter $(m,\kappa)$ leading to an operator $H_{m,\kappa}$
with an infinite number of eigenvalues, together with the reference operator $H_{\frac{1}{2},0}$. The resulting
partial isometry $\W^{-}_{m,\kappa;\frac{1}{2},0}$ has a trivial kernel but an infinite dimensional cokernel.
Fortunately, the operator $\W^{-}_{m,\kappa;\frac{1}{2},0}$ is invariant under the action of a representation of $\Z$,
or equivalently this operator belongs to a $C^*$-algebra of pseudo-differential operators on $\R$ with periodic coefficients.
This situation corresponds to the situation considered in the seminal paper \cite{Atiyah} where
an index theorem is provided for elliptic operators on a non-compact manifold which are invariant under the action
of a discrete group.
The decomposition with respect to the group corresponds in our setting to the Floquet-Bloch decomposition.
A suitable $\Z$-index can then be defined, and the resulting index theorem
is an equality between this $\Z$-index computed on $\W^{-}_{m,\kappa;\frac{1}{2},0}$ and the
winding number computed on one period of a periodic function.
We refer to Theorem \ref{index theorem 4} for the details.

What happens now if the pseudo-differential operator is no more periodic but only asymptotically periodic\;\!?
This is the situation studied in Section \ref{sec_semi_II}. Indeed, for a suitable family of parameters,
the operator $\W^{-}_{m,\kappa;m',\kappa'}$ has such a property, and correlatively
possesses a finite dimensional kernel and an infinite dimensional cokernel.
In this case the relevant $C^*$-algebra is made of pseudo-differential operators with asymptotically periodic coefficients.
Such an algebra has been studied for example in \cite{CM} and its $K$-theory in \cite{MS},
and its main feature is a $2$-link ideal chain.
As a consequence two index theorems can be deduced. The first one, Theorem \ref{index theorem 6}, is related
to the infinite dimensional cokernel and corresponds to two copies of the index theorem
for pseudo-differential operators with periodic coefficients, as studied in Section \ref{sec_semi_I}.
These two systems are the ones located at $\pm \infty$. However, the drawback of this result is
that the information about the finite dimensional kernel is lost.
On the other hand, this information can be recovered if a fixed reference system is chosen such that
the cokernel is filled. Once this operation done, the resulting operator is simply a partial isometry
with a finite dimensional kernel and a trivial cokernel. The necessary index theorem can now be borrowed
from the Section \ref{sec_Fredholm} for Fredholm operators.
This result is presented in Theorem \ref{thm_plugged}.

In Section \ref{sec_ap} we finally investigate the case when the parameters $(m,\kappa)$ and $(m',\kappa')$
lead to an operator $\W^-_{m,\kappa;m',\kappa'}$ which belongs to an algebra
of pseudo-differential operators with uniformly almost periodic coefficients, as thoroughly studied
in \cite{CDSS,CMS,Shubin1}. This situation also corresponds to an operator $\W^-_{m,\kappa;m',\kappa'}$
with infinite dimensional kernel and cokernel. In this context, suitable generalizations for the notions of
winding number and trace are proposed, and the corresponding index theorem is stated in Theorem
\ref{index theorem 5}.
Let us finally mention that an additional information is available in our context. Indeed, since the kernel and cokernel of
$\W^-_{m,\kappa;m',\kappa'}$ are related to the eigenfunctions of the operators $H_{m',\kappa'}$ and $H_{m,\kappa}$ respectively,
the relative densities of the corresponding eigenvalues can be computed. This computation is presented in Remark \ref{rem_dens}.

As a conclusion, let us confess that the index theorems presented above are certainly not the most general ones. However, they correspond to
incarnations of several famous theorems, and for these results not so many explicitly computable examples exist.
We hope that having concrete examples at hand can be useful for further developing some abstract theory.
For example, the algebra and the corresponding index theorems presented in Section \ref{sec_semi_II} certainly deserve
more investigations, since the results are not so satisfactory.
On the other hand, let us emphasize that these investigations correspond to a fruitful interaction between
scattering theory and operator algebras, two fields of research which are usually not known for their reciprocal emulation.

\section{The model}\label{sec_model}

In this section we introduce the model which leads to the expression \eqref{expression in line}
and provide some information about its spectral and scattering theory.
This material is borrowed from \cite{DR} to which we refer for more explanations and for the proofs.

For any $m \in \C$ we consider the differential expression
\begin{equation*}
L_{m^2}:=-\partial_r^2+\Bigl(m^2-\frac{1}{4}\Bigr)\frac{1}{r^2}
\end{equation*}
acting on distributions on $\R_+$.
The maximal operator associated with $L_{m^2}$ in $L^2(\R_+)$ is defined by
$\D(L_{m^2}^{\max})=\{f\in L^2(\R_+)\mid L_{m^2}f\in L^2(\R_+)\}$, and the minimal operator
$\D(L_{m^2}^{\min})$ is the closure of the restriction of $L_{m^2}$ to $C_{c}^\infty(\R_+)$
(the set of compactly supported smooth functions on $\R_+$).
Then, the equality $(L_{m^2}^{\min})^*=L_{\bar{m}^2}^{\max}$ holds for any $m\in\C$,
and $L_{m^2}^{\min}=L_{m^2}^{\max}$ if $|\Re(m)|\geq 1$ while
$L_{m^2}^{\min}\subsetneq L_{m^2}^{\max}$ if $|\Re(m)|<1$.
In the latter situation $\D(L_{m^2}^{\min})$ is a closed subspace of codimension $2$ of $\D(L_{m^2}^{\max})$,
and for any $f\in\D(L_{m^2}^{\max})$ there exist $a,b\in\C$ such that
\begin{equation*}
f(r)-ar^{\frac{1}{2}-m}-br^{\frac{1}{2}+m}\in \D(L_{m^2}^{\min})\ \text{around}\ 0.
\end{equation*}
Here, the expression $g(r)\in\D(L_{m^2}^{\min})$ \emph{around} $0$ means that there exists
$\zeta\in\ C_c^{\infty}\big([0,\infty)\big)$ with $\zeta=1$ around $0$ such that $g\zeta\in\D(L_{m^2}^{\min})$.
It is worth mentioning that the functions $r \mapsto r^{\frac{1}{2}\pm m}$ are the two linearly independent solutions
of the ordinary differential equation $L_{m^2}u=0$, and that they are square integrable near $0$ if $|\Re(m)|<1$.

Based on the above observations we construct various closed extensions of the operator $L_{m^2}^{\min}$.
For simplicity we restrict our attention to $m\in \C$ with $|\Re(m)|<1$. These extensions
are parameterized by a boundary condition at $0$, namely for any $\kappa\in\C$
we define the operator $H_{m,\kappa}$ by considering the restriction of the operator $L_{m^2}^{\max}$ on the domain
\begin{align*}
\D(H_{m,\kappa})=\big\{&f\in\D(L_{m^2}^{\max})\mid \text{for some}\ c\in\C,\\
&\qquad \qquad \qquad f(r)-c(\kappa r^{\frac{1}{2}-m}+r^{\frac{1}{2}+m})\in\D(L_{m^2}^{\min})\ \text{around}\ 0\big\}.
\end{align*}
Note that if $m=0$ these operators do not depend on $\kappa$, and in fact another natural family of extensions exist in this case.
For simplicity we shall not consider this family here, and therefore exclude the case $m=0$.
Note also that if $m=\frac{1}{2}$ and $\kappa=0$, the operator $H_{\frac{1}{2},0}$ corresponds to the \emph{Dirichlet Laplacian} on $\R_+$,
which will be denoted by $H_{\rm D}$. Let us also mention that $H_{m,\kappa}$ is self-adjoint if
$(m,\kappa)$ belongs to the set
\begin{equation}\label{def_Omega}
\Omega_{\rm sa}:=\big\{(m,\kappa)\mid m\in (-1,1) \setminus \{0\}\hbox{ and } \kappa \in \R, \hbox{ or } m\in i \R^* \hbox{ and } |\kappa|=1\big\}.
\end{equation}
In the sequel, we shall stick to the self-adjoint case, which means we shall always consider $(m,\kappa)$ in the above set.

In the next statement we provide some information on the point spectrum $\sigma_{\rm p}(H_{m,\kappa})$ of $H_{m,\kappa}$.
For $(m,\kappa)$ in $\Omega_{\rm sa}$ we set
$$
\varsigma=\varsigma(m,\kappa):=\kappa\frac{\Gamma(-m)}{\Gamma(m)}
$$
with $\Gamma$ the usual Gamma function.

\begin{theorem}[\cite{DR}, Theorem 5.2 \& 5.5]\label{spectral theory}
For $(m,\kappa)$ in $\Omega_{\rm sa}$ one has
\begin{align*}
\sigma_{\rm p}(H_{m,\kappa})=\begin{cases}
\left\{-4\e^{-w}\mid w\in \frac{1}{m}\Ln(\varsigma)\ \text{and}\ \Im(w)\in(-\pi,\pi)\right\} & \text{if}\ \kappa \neq0\\
\emptyset & \text{if}\  \kappa=0,
\end{cases}
\end{align*}
where $\Ln$ denotes the set of values of the multivalued logarithm. In particular,
\begin{enumerate}
\item[(i)] If $m\in (-1,1)\setminus\{0\}$ and $\kappa\in \R$ then $H_{m,\kappa}$ has a finite number of eigenvalues,
\item[(ii)] If $m=in$ with $n \in \R^*$ and if $|\kappa|=1$, then
\begin{equation*}
\sigma_p(H_{in,\kappa})=\left\{-4\exp\Bigl(-\frac{\arg(\varsigma)+2\pi j}{n}\Bigr)\mid j\in\Z\right\}.
\end{equation*}
\end{enumerate}
\end{theorem}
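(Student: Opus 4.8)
The plan is to analyze directly the eigenvalue equation $L_{m^2}u=\lambda u$ with $\lambda\in\R$ and to detect for which $\lambda$ a solution lies in $\D(H_{m,\kappa})$. First I would recall the two standard facts about this Sturm--Liouville problem: since $|\Re(m)|<1$ the endpoint $0$ is in the limit circle case and contributes nothing to the essential spectrum, whereas the behaviour of the potential at $+\infty$ gives $\sigma_{\rm ess}(H_{m,\kappa})=[0,\infty)$; moreover at any energy $\lambda\ge 0$ every solution of $L_{m^2}u=\lambda u$ is oscillatory or polynomially bounded at $+\infty$, hence not square integrable there, so $H_{m,\kappa}$ has no eigenvalue in $[0,\infty)$. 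Consequently $\sigma_{\rm p}(H_{m,\kappa})\subset(-\infty,0)$, and a candidate eigenvalue may be written as $\lambda=-k^2$ with $k>0$.

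For such $\lambda$ the change of variables $u(r)=\sqrt r\,v(kr)$ turns the equation into the modified Bessel equation of order $m$, whose only solution square integrable near $+\infty$ (limit point case) is, up to a scalar, $u(r)=\sqrt r\,K_m(kr)$; this function also lies in $L^2$ near $0$ since there it behaves like $r^{\frac12\pm\Re(m)}$ and $|\Re(m)|<1$. Next I would insert the identity $K_m=\frac{\pi}{2\sin(\pi m)}(I_{-m}-I_m)$, valid because $m\notin\Z$ on $\Omega_{\rm sa}$, together with $I_\mu(z)=\sum_{l\ge 0}\frac{1}{l!\,\Gamma(\mu+l+1)}(z/2)^{2l+\mu}$, to expand $\sqrt r\,K_m(kr)$ around $0$ and read off the coefficients $a$ of $r^{\frac12-m}$ and $b$ of $r^{\frac12+m}$; the remaining terms have vanishing $r^{\frac12\mp m}$-coefficients and therefore belong to $\D(L_{m^2}^{\min})$ around $0$ by the characterization recalled in Section \ref{sec_model}. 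Hence $\sqrt r\,K_m(kr)\in\D(H_{m,\kappa})$ if and only if $a=\kappa b$; computing $a/b=-(k/2)^{-2m}\,\Gamma(1+m)/\Gamma(1-m)$ and using $\Gamma(1\pm m)=\pm m\,\Gamma(\pm m)$, this reduces to the single scalar equation
\begin{equation*}
(k/2)^{-2m}=\kappa\,\frac{\Gamma(-m)}{\Gamma(m)}=\varsigma .
\end{equation*}

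From here the statement is elementary. If $\kappa=0$ then $\varsigma=0$ and the equation has no solution $k>0$, so $\sigma_{\rm p}(H_{m,\kappa})=\emptyset$. If $\kappa\neq 0$, taking logarithms gives $(k/2)^2=\e^{-w}$ with $w:=\frac1m\Ln(\varsigma)$, so $\lambda=-k^2=-4\e^{-w}$; the constraint $k>0$ forces $\e^{-w}\in(0,\infty)$, i.e.\ $\Im(w)\in 2\pi\Z$, and choosing the representative in the strip $\Im(w)\in(-\pi,\pi)$ yields exactly the set in the statement (one checks that on $\Omega_{\rm sa}$ every $w\in\frac1m\Ln(\varsigma)$ with $\Im(w)\in(-\pi,\pi)$ already has $\Im(w)=0$, hence does produce an admissible $k>0$). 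Specializing, if $m\in(-1,1)\setminus\{0\}$ and $\kappa\in\R$ then $\varsigma\in\R$, so $(k/2)^{-2m}=\varsigma$ has one solution $k>0$ when $\varsigma>0$ and none otherwise, giving at most one eigenvalue; if $m=in$ with $n\in\R^*$ and $|\kappa|=1$ then $|\varsigma|=1$, so $\Ln(\varsigma)\subset i\R$ and $w=\frac{\arg(\varsigma)+2\pi j}{n}$ runs over a real arithmetic progression indexed by $j\in\Z$, producing the eigenvalues $-4\exp\!\big(-(\arg(\varsigma)+2\pi j)/n\big)$.

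The main obstacle is less any single computation than the bookkeeping around the multivalued relation $(k/2)^{-2m}=\varsigma$: one must retain only the branches producing $k>0$ (equivalently $\lambda$ real and negative, as it must be by self-adjointness), check that $\Im(w)\in(-\pi,\pi)$ is precisely the right fundamental domain for the $2\pi i\Z$-ambiguity, and make sure that no admissible branch is lost in the purely imaginary case $m=in$. A secondary technical point is the rigorous justification that the sub-leading part of the near-$0$ expansion of $\sqrt r\,K_m(kr)$ indeed lands in $\D(L_{m^2}^{\min})$, which rests on the codimension-$2$ description of that domain stated in Section \ref{sec_model}.
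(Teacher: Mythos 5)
The paper does not prove this statement—it is quoted verbatim from \cite{DR}, Theorems 5.2 and 5.5—but your argument is correct and is essentially the one used there: for $\lambda=-k^2<0$ the unique decaying solution is $\sqrt{r}\,K_m(kr)$, and matching its $r^{\frac12\pm m}$ coefficients (via $K_m=\tfrac{\pi}{2\sin(\pi m)}(I_{-m}-I_m)$) against the boundary condition yields $(k/2)^{-2m}=\varsigma$. Your remark that on $\Omega_{\rm sa}$ the strip condition $\Im(w)\in(-\pi,\pi)$ forces $\Im(w)=0$ correctly reconciles your restriction to $k>0$ with the stated form; in the non-self-adjoint cases treated in \cite{DR} that strip genuinely encodes the decay requirement $\Re(k)>0$, which is why the theorem is phrased that way.
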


We now review one construction related to the scattering theory for the operators $H_{m,\kappa}$.
More generally, this theory can be elaborated in a time-dependent framework or in a time-independent framework.
Here we stick to the second formulation and provide only a minimal set of information,
see \cite[Sec.~6]{DR} for a more general picture.

For any $(m,\kappa)$ in $\Omega_{\rm sa}$ we define the \emph{incoming} and \emph{outgoing Hankel transforms}
$\F^{\mp}_{m,\kappa}: C_c^{\infty}(\R_+)\to L^2(\R_+)$ by the integral kernels
\begin{equation*}
\F_{m,\kappa}^{\mp}(r,s):=\e^{\mp i\frac{\pi}{2}m}\sqrt{\frac{2}{\pi}}\frac{\J_m(rs)-
\varsigma\J_{-m}(rs)\left(\frac{s}{2}\right)^{2m}}{1-\varsigma \e^{\mp i\pi m}\left(\frac{s}{2}\right)^{2m}}\qquad \text{for}\ r,s\in\R_+
\end{equation*}
and where $\J_m(r):=\sqrt{\frac{\pi r}{2}}J_m(r)$ with $J_m$ the Bessel function.
These operators extend continuously to bounded operators on $L^2(\R_+)$.
If  $(m',\kappa')$ also belong to $\Omega_{\rm sa}$ we then set
\begin{equation*}
W_{m,\kappa;m',\kappa'}^{\mp}:=\F^{\mp}_{m,\kappa}\F_{m',\kappa'}^{\mp *}\ ,
\end{equation*}
with ${}^*$ denoting the adjoint operator.
These operators correspond to the so-called M\o ller wave operators and play a major role in scattering theory.
Some properties of the Hankel transformations and of the wave operators are gathered in the next statement.

\begin{proposition}\label{properties}
Let $(m,\kappa)$ and $(m',\kappa')$ belong to $\Omega_{\rm sa}$ . Then,
\begin{enumerate}
\item[(i)] $\F_{m,\kappa}^{\mp *}\F_{m,\kappa}^{\mp}=\one$,
\item[(ii)] $\one_{\rm c}(H_{m,\kappa}):=\F_{m,\kappa}^{\mp}\F_{m,\kappa}^{\mp *}$ corresponds to
the projection onto the continuous subspace of $H_{m,\kappa}$,
\item[(iii)] $W_{m,\kappa;m',\kappa'}^{\mp *}W_{m,\kappa;m',\kappa'}^{\mp}=\one_{\rm c}(H_{m',\kappa'})$,
\item[(iv)] $W_{m,\kappa;m',\kappa'}^{\mp}W_{m,\kappa;m',\kappa'}^{\mp *}=\one_{\rm c}(H_{m,\kappa})$,
\item[(v)] The intertwining relation $W^{\mp}_{m,\kappa;m',\kappa'}H_{m',\kappa'}=H_{m,\kappa}W_{m,\kappa;m',\kappa'}^{\mp}$ holds.
\end{enumerate}
\end{proposition}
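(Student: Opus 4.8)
The plan is to isolate (i) and (ii) as the analytic core of the statement --- they express that $\F^{\mp}_{m,\kappa}$ is precisely the generalized Fourier transform diagonalizing $H_{m,\kappa}$ on its continuous subspace --- and then to deduce (iii), (iv) and (v) from them by purely algebraic manipulations of the defining identity $W^{\mp}_{m,\kappa;m',\kappa'}=\F^{\mp}_{m,\kappa}\F^{\mp*}_{m',\kappa'}$.

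For (i) and (ii) I would argue as follows. For fixed $s>0$, the function $r\mapsto\F^{\mp}_{m,\kappa}(r,s)$ solves $L_{m^2}u=s^2u$ on $\R_+$, since both $r\mapsto\J_m(rs)$ and $r\mapsto\J_{-m}(rs)(s/2)^{2m}$ do and the numerator of $\F^{\mp}_{m,\kappa}(r,s)$ is a linear combination of them. Using the small-$r$ expansion $\J_{\pm m}(rs)=\frac{\sqrt{\pi}}{2^{\pm m+1/2}\Gamma(1\pm m)}(rs)^{1/2\pm m}+O\big((rs)^{5/2\pm m}\big)$ together with the identity $\varsigma\,\Gamma(1+m)/\Gamma(1-m)=-\kappa$ (which follows from $\varsigma=\kappa\,\Gamma(-m)/\Gamma(m)$ and the functional equation for $\Gamma$), one checks that near $0$ the numerator is proportional to $\kappa r^{1/2-m}+r^{1/2+m}$, so $\F^{\mp}_{m,\kappa}(\cdot,s)$ satisfies the boundary condition entering $\D(H_{m,\kappa})$; these are thus the generalized eigenfunctions of the self-adjoint operator $H_{m,\kappa}$. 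The denominator $1-\varsigma\e^{\mp i\pi m}(s/2)^{2m}$ is exactly the normalization making the family $\{\F^{\mp}_{m,\kappa}(\cdot,s)\}_{s>0}$ orthogonal and complete (in the continuum sense) in the continuous --- in fact absolutely continuous --- subspace of $H_{m,\kappa}$, and this orthogonality and completeness is what the stationary scattering theory of \cite[Sec.~6]{DR} provides, also ruling out singular continuous spectrum. With this reading, (i) is the Plancherel identity $\|\F^{\mp}_{m,\kappa}f\|=\|f\|$, and (ii) is the completeness relation $\F^{\mp}_{m,\kappa}\F^{\mp*}_{m,\kappa}=\one_{\rm c}(H_{m,\kappa})$. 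An equivalent, more hands-on route --- the one also underlying \eqref{expression in line} --- is to conjugate $\F^{\mp}_{m,\kappa}$ by the unitary $\U\colon L^2(\R_+,\d r)\to L^2(\R)$, $(\U f)(x)=\e^{x/2}f(\e^x)$, under which the classical (unitary, involutive) Hankel transforms become, up to a parity operator, the Fourier multipliers built from the functions $\Xi_{\pm m}$ of \eqref{eq_Xi}, and the factors $(s/2)^{2m}$ become the multiplication operators $\e^{2mX}$; items (i) and (ii) then turn into identities between explicit operators built from these two commuting families.

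The remaining items are formal. Taking the adjoint of $W^{\mp}_{m,\kappa;m',\kappa'}=\F^{\mp}_{m,\kappa}\F^{\mp*}_{m',\kappa'}$ gives $W^{\mp*}_{m,\kappa;m',\kappa'}=\F^{\mp}_{m',\kappa'}\F^{\mp*}_{m,\kappa}$, and inserting $\F^{\mp*}_{m,\kappa}\F^{\mp}_{m,\kappa}=\one$ (item (i) for the pair $(m,\kappa)$) yields
\[
W^{\mp*}_{m,\kappa;m',\kappa'}W^{\mp}_{m,\kappa;m',\kappa'}=\F^{\mp}_{m',\kappa'}\F^{\mp*}_{m',\kappa'}=\one_{\rm c}(H_{m',\kappa'})
\]
by item (ii) for the pair $(m',\kappa')$, which is (iii); item (iv) follows in the same way by exchanging the roles of the two pairs of parameters. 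For the intertwining relation (v) I would use that $r\mapsto\F^{\mp}_{m,\kappa}(r,s)$ solves $L_{m^2}u=s^2u$ and lies in $\D(H_{m,\kappa})$, whence $H_{m,\kappa}\F^{\mp}_{m,\kappa}=\F^{\mp}_{m,\kappa}\Lambda$ with $\Lambda$ the operator of multiplication by $s^2$; taking adjoints and using self-adjointness of $H_{m',\kappa'}$ and $\Lambda$ gives $\F^{\mp*}_{m',\kappa'}H_{m',\kappa'}=\Lambda\F^{\mp*}_{m',\kappa'}$, and composing the two relations yields $H_{m,\kappa}W^{\mp}_{m,\kappa;m',\kappa'}=W^{\mp}_{m,\kappa;m',\kappa'}H_{m',\kappa'}$ on $\D(H_{m',\kappa'})$.

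The genuinely delicate step is (ii): the completeness of the generalized eigenfunction expansion, together with the absence of singular continuous spectrum, does not admit a short self-contained proof valid for all $(m,\kappa)\in\Omega_{\rm sa}$ --- especially in the oscillatory regime $m\in i\R^*$, where the denominator $1-\varsigma\e^{\mp i\pi m}(s/2)^{2m}$ is more singular --- and I would take it from the detailed spectral analysis of \cite{DR}. A secondary technical point, to be dealt with throughout, is that $\F^{\mp}_{m,\kappa}$ is initially defined only on $C_c^\infty(\R_+)$, so all the operator identities above must first be established on that core and then extended by continuity to $L^2(\R_+)$.
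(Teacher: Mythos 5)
Your proposal is correct: the paper itself gives no proof of this proposition, quoting it directly from \cite{DR}, and your division of labour --- taking the eigenfunction expansion, completeness, and absence of singular continuous spectrum (items (i)--(ii)) from the spectral analysis of \cite{DR}, then deducing (iii), (iv) and (v) by purely algebraic manipulation of $W^{\mp}_{m,\kappa;m',\kappa'}=\F^{\mp}_{m,\kappa}\F^{\mp*}_{m',\kappa'}$ --- is exactly the intended reading. Your verification that the kernel satisfies the boundary condition defining $\D(H_{m,\kappa})$ (via $\varsigma\,\Gamma(1+m)/\Gamma(1-m)=-\kappa$) and your caveats about the core $C^\infty_c(\R_+)$ and the oscillatory regime $m\in i\R^*$ are all sound.
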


In addition to the previous relations, if $\one_{\rm p}(H_{m,\kappa})$ denotes the orthogonal projection onto the subspace
generated by the eigenfunctions of $H_{m,\kappa}$, then one has
\begin{equation}\label{projections}
\big[W_{m,\kappa;m',\kappa'}^{\mp},W_{m,\kappa;m',\kappa'}^{\mp *}\big]=\one_{\rm p}(H_{m',\kappa'})-\one_{\rm p}(H_{m,\kappa}),
\end{equation}
where we have used the standard notation for the commutator.

Our main interest in the model introduced so far is that explicit formulas for the
wave operators can be exhibited.
For that purpose let us introduce \emph{the dilation group} $\{U_t\}_{t\in\R}$ on $L^2(\R_+)$ defined by
$[U_tf](r):=\e^{t/2}f(\e^tr)$ for any $f\in L^2(\R_+)$, $t\in \R$ and $r\in\R_+$.
Its self-adjoint generator $A$ is called \emph{the generator of dilations}
and corresponds to the differential expression
\begin{equation*}
A=\frac{1}{2i}(R\partial_r+\partial_rR),
\end{equation*}
where $R$ denotes the multiplication operator by the variable in $\R_+$,
\ie~$[Rf](r)=rf(r)$ for $f$ in a suitable subset of $L^2(\R_+)$.

In the following statement we recall the expressions for the wave operators
obtained in \cite[Sec.~6]{RS}. Before this, let us recall that the bounded and continuous
function $\Xi_m:\R\to \C$ has been introduced in \eqref{eq_Xi}, and let us set $J$
for the self-adjoint and unitary operator in $L^2(\R_+)$ defined by $[Jf](r):=\frac{1}{r}f\left(\frac{1}{r}\right)$.

\begin{lemma}\label{expression in half-line}
Let $(m,\kappa)$ and $(m',\kappa')$ belong to $\Omega_{\rm sa}$, and set
$\varsigma:=\kappa\frac{\Gamma(-m)}{\Gamma(m)}$ and $\varsigma':=\kappa'\frac{\Gamma(-m')}{\Gamma(m')}$.
Then the operators $W_{m,\kappa;m',\kappa'}^{\mp}$ coincide with the operators
\begin{align*}
& J\Big(\Xi_m(A)-\varsigma\Xi_{-m}(A)\big(\tfrac{R}{2}\big)^{2m}\Big)
\frac{\e^{\mp i\frac{\pi}{2}m}}{1-\varsigma \e^{\mp i\pi m}\left(\frac{R}{2}\right)^{2m}}\\
&\quad \times \frac{\e^{\pm i\frac{\pi}{2}m'}}{1-\varsigma' \e^{\pm i\pi m'}\left(\frac{R}{2}\right)^{2m'}}
\Big(\Xi_{m'}(-A)-\varsigma'\big(\tfrac{R}{2}\big)^{2m'}\Xi_{-m'}(-A)\Big)J.
\end{align*}
\end{lemma}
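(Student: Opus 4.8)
The plan is to start from the defining formula $W_{m,\kappa;m',\kappa'}^{\mp}=\F^{\mp}_{m,\kappa}\F_{m',\kappa'}^{\mp *}$ and to rewrite each Hankel transform $\F^{\mp}_{m,\kappa}$ in terms of the generator of dilations $A$. The key observation is that the integral kernel $\F_{m,\kappa}^{\mp}(r,s)$ is built from the two unperturbed Hankel kernels $\J_{\pm m}(rs)$ together with multiplication by the factor $(s/2)^{2m}$ and division by $1-\varsigma\e^{\mp i\pi m}(s/2)^{2m}$. First I would recall (or import from \cite{RS} and \cite{DR}) the well-known diagonalization of the standard Hankel transform $\mathcal H_m$ with kernel $\sqrt{2/\pi}\,\J_m(rs)$: conjugating by $J$ one has $J\mathcal H_m J=\Xi_m(A)$ up to the appropriate normalization, which is precisely the statement that the Mellin transform turns $\mathcal H_m$ into multiplication by the quotient of Gamma functions appearing in \eqref{eq_Xi}. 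This is the analytic input and I would simply quote it rather than reprove it.

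Next I would handle the perturbation. The numerator of $\F_{m,\kappa}^{\mp}(r,s)$ is $\e^{\mp i\pi m/2}\big(\J_m(rs)-\varsigma\J_{-m}(rs)(s/2)^{2m}\big)$, so that, upon conjugation by $J$ on the left and using that $(s/2)^{2m}$ acts on the $s$-variable — i.e. on the argument of $\F_{m',\kappa'}^{\mp *}$ after the product is formed — one gets the combination $\Xi_m(A)-\varsigma\Xi_{-m}(A)(R/2)^{2m}$ acting from the left, exactly as in the claimed expression. The commutation between $\Xi_{\pm m}(A)$ and the powers $(R/2)^{2m}$ has to be tracked carefully, since $A$ and $R$ do not commute; the clean bookkeeping comes from the identity $\e^{-2m\,?}$... more precisely from $R^{2m}\,\Xi_{\pm m}(A)=\Xi_{\pm m}(A+2im\,?)R^{2m}$-type relations, but in fact the ordering in the statement is already the correct one and no reshuffling is needed if one is careful about which operator acts on which variable. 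The scalar denominator $1-\varsigma\e^{\mp i\pi m}(R/2)^{2m}$ is a bounded multiplication operator (boundedness on the relevant spectral subspace follows from $(m,\kappa)\in\Omega_{\rm sa}$, which is where the self-adjointness hypotheses are used to exclude poles), and it simply passes through as the indicated factor.

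Then I would treat $\F_{m',\kappa'}^{\mp *}$ by taking the adjoint of the formula just obtained for $\F_{m',\kappa'}^{\mp}$: adjunction sends $\Xi_{m'}(A)\mapsto\Xi_{m'}(-A)$ (since $\overline{\Xi_{m'}(\xi)}=\Xi_{m'}(-\xi)$ for real $\xi$ on the self-adjoint sheet), sends $J\mapsto J$, and reverses the order of the factors, producing exactly the second line of the claimed expression with the $\mp$ replaced by $\pm$ in the $m'$-dependent phases. Multiplying the two pieces and collecting the central scalar factors $\frac{\e^{\mp i\pi m/2}}{1-\varsigma\e^{\mp i\pi m}(R/2)^{2m}}$ and $\frac{\e^{\pm i\pi m'/2}}{1-\varsigma'\e^{\pm i\pi m'}(R/2)^{2m'}}$ side by side yields the formula in the statement.

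The main obstacle I anticipate is purely a matter of careful operator bookkeeping: keeping straight which multiplication operators act on $r$ versus on $s$ before the product $\F^{\mp}_{m,\kappa}\F_{m',\kappa'}^{\mp *}$ is assembled, and correctly propagating the conjugation by $J$ (which inverts the variable and hence flips the sign of $A$ and turns $R$ into $R^{-1}$) through all the factors. A secondary point requiring care is the justification that the scalar denominators define bounded operators and that all the manipulations are legitimate on the continuous spectral subspace — but this is exactly the content already encoded in the restriction $(m,\kappa),(m',\kappa')\in\Omega_{\rm sa}$ and in Proposition \ref{properties}, so it can be invoked rather than redone. Since the deep analytic facts (the diagonalization of $\mathcal H_m$ via the Mellin transform and the explicit form of $\F^{\mp}_{m,\kappa}$) are borrowed from \cite{DR,RS}, the proof reduces to an algebraic verification and I would present it as such.
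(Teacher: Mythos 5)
First, note that the paper itself contains no proof of this lemma: the statement is explicitly \emph{recalled} from Section~6 of the Derezi\'nski--Richard paper \cite{DR} (the citation printed as \cite{RS} just above the lemma is evidently a slip), so there is no in-paper argument to measure yours against. Your strategy is nonetheless the natural one and is essentially the derivation carried out in that reference: read off from the integral kernel the operator factorization $\F^{\mp}_{m,\kappa}=\e^{\mp i\frac{\pi}{2}m}\big(\mathcal{H}_m-\varsigma\;\!\mathcal{H}_{-m}(\tfrac R2)^{2m}\big)\big(1-\varsigma\e^{\mp i\pi m}(\tfrac R2)^{2m}\big)^{-1}$, where $\mathcal{H}_m$ denotes the unperturbed Hankel transform with kernel $\sqrt{2/\pi}\,\J_m(rs)$ and all $s$-dependent factors sit to the right; import the Mellin-transform diagonalization $\mathcal{H}_m=J\;\!\Xi_m(A)$ (consistent with $JAJ=-A$, $\Xi_m(-\xi)=\Xi_m(\xi)^{-1}$ and $\mathcal{H}_m^2=\one$); and obtain the second factor by taking adjoints. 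With $\mathcal{H}_{\pm m}=J\;\!\Xi_{\pm m}(A)$ the single $J$ factors out of the bracket on the left and no commutation of $\Xi_{\pm m}(A)$ past $(\tfrac R2)^{2m}$ is ever needed, so the hedging paragraph with the unresolved ``$\Xi_{\pm m}(A+2im\,\cdot)R^{2m}$''-type relation can simply be deleted; the ordering is forced by which variable each kernel factor depends on, as you correctly suspect.

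The one genuine gap is in the adjoint step. The rule $\overline{\Xi_{m'}(\xi)}=\Xi_{m'}(-\xi)$ that you invoke holds only for \emph{real} $m'$; on the other branch of $\Omega_{\rm sa}$, where $m'\in i\R^*$, one has instead $\Xi_{m'}(A)^{*}=\Xi_{-m'}(-A)$ (exactly the rule the paper uses later in the form $\Xi_{in}(D)^{*}=\Xi_{-in}(-D)$), together with $\overline{\varsigma'}=\varsigma'^{-1}$ and $(\tfrac R2)^{2\overline{m'}}=(\tfrac R2)^{-2m'}$. The naive adjoint therefore yields
\begin{equation*}
\frac{\e^{\mp i\frac{\pi}{2}m'}}{1-\varsigma'^{-1}\e^{\mp i\pi m'}(\tfrac R2)^{-2m'}}
\Big(\Xi_{-m'}(-A)-\varsigma'^{-1}\big(\tfrac R2\big)^{-2m'}\Xi_{m'}(-A)\Big)J,
\end{equation*}
which is not literally the second line of the statement: one must still factor $-\varsigma'(\tfrac R2)^{2m'}$ out of the bracket and absorb it into the denominator to recover the displayed form. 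This is a two-line algebraic identity, but without it your argument establishes the lemma only for real $m,m'$ and misses precisely the parameter range needed in Sections 4--6. The remaining points you flag (boundedness of the scalar denominators, which follows from $\e^{\mp i\pi m}\notin\R$ for $m\in(-1,1)\setminus\{0\}$ with $\varsigma\in\R$, and from $|\varsigma\e^{\pm\pi n}|\neq1$ for $m=in$ with $|\varsigma|=1$) are indeed routine and fine to treat as you do.
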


From the previous statement we can now deduce the representation of the wave operators in $L^2(\R)$
provided in \eqref{expression in line}. The proof is based on Stone-von Neumann theorem.
In its statement and in the sequel we shall use the $C^*$-algebra $C\big([-\infty,\infty]\big)$ of continuous
functions having limits at $\pm \infty$.
We also denote by $C_{\rm b}(\R)$ the $C^*$-algebra of bounded and continuous
functions on $\R$.
For the following proof, we shall use the property which we recall from the proof
of \cite[Thm.~4.10]{DR}: For any $m,m'\in \C$ with $\Re(m)>-1$ and $\Re(m')>-1$
the map $\xi\mapsto \Xi_m(-\xi)\Xi_{m'}(\xi)$ belongs to $C\big([-\infty,\infty]\big)$
and the following equalities hold:
\begin{equation}\label{eq_limits}
\Xi_m(\mp \infty)\Xi_{m'}(\pm \infty)=\e^{\mp i \frac{\pi}{2}(m-m')}.
\end{equation}

\begin{proposition}\label{prop_unit_eq}
Let $(m,\kappa),$ $(m',\kappa')$, $\varsigma$ and $\varsigma'$ be as in Lemma \ref{expression in half-line}.
Then, the wave operators $W_{m,\kappa;m',\kappa'}^{\mp}$ are unitarily equivalent to the operators
provided in \eqref{expression in line} and acting in $L^2(\R)$.
Moreover, these operators belong to the unital $C^*$-subalgebra $\E$ of $\B\big(L^2(\R)\big)$ generated by
products of the form $a(D)b(X)$ with $a\in C\big([-\infty,\infty]\big)$ and $b\in C_{\rm b}(\R)$.
\end{proposition}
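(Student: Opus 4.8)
The plan is to prove the two assertions of the statement in turn. For the unitary equivalence, I would start from the formula of Lemma \ref{expression in half-line}, which expresses $W^{\mp}_{m,\kappa;m',\kappa'}$ entirely in terms of the dilation generator $A$, the multiplication operator $R$ on $L^2(\R_+)$, and the involution $J$. Since $\e^{itA}\ln(R)\e^{-itA}=\ln(R)+t$, the pair $\big(\ln(R/2),A\big)$ satisfies the canonical commutation relation in Weyl form and is irreducible, so by the Stone--von Neumann theorem there is a unitary $V\colon L^2(\R_+)\to L^2(\R)$ with $V\ln(R/2)V^*=X$ and $VAV^*=D$. Such a $V$ then intertwines the spectral calculus of $\ln(R/2)$ with that of $X$, and that of $A$ with that of $D$; in particular it replaces $(R/2)^{2\mu}$ by $\e^{2\mu X}$ and $h(A)$ by $h(D)$ throughout, and therefore maps the (bounded) central operator in the formula of Lemma \ref{expression in half-line} onto the part of \eqref{expression in line} lying between the two factors $\Xi_{1/2}(-D)$ and $\Xi_{1/2}(D)$. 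Now set $\mathscr U:=\Xi_{1/2}(-D)\,V\,J$, which is unitary because $|\Xi_{1/2}|\equiv1$ and $J^2=\one$; using $\Xi_{1/2}(-D)^*=\Xi_{1/2}(D)$, which follows from the elementary identity $\overline{\Xi_{1/2}(\xi)}=\Xi_{1/2}(-\xi)$, conjugating the formula of Lemma \ref{expression in half-line} by $\mathscr U$ produces exactly \eqref{expression in line}: the two outer copies of $J$ cancel, and the $\Xi_{1/2}$-part of $\mathscr U$ supplies the boundary factors $\Xi_{1/2}(-D)$ and $\Xi_{1/2}(D)$. Written with $\ln(R/2)$ as the variable conjugate to $A$ there are no spurious constants to track, so this step is essentially mechanical once Stone--von Neumann is invoked.

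For the membership of $\W^{\mp}_{m,\kappa;m',\kappa'}$ in $\E$, I would multiply out the two brackets $\big(\Xi_m(D)-\varsigma\Xi_{-m}(D)\e^{2mX}\big)$ and $\big(\Xi_{m'}(-D)-\varsigma'\e^{2m'X}\Xi_{-m'}(-D)\big)$ in \eqref{expression in line}, obtaining a sum of four terms, and then re-group the factors within each term. The two boundary factors $\Xi_{1/2}(-D)$ and $\Xi_{1/2}(D)$ are combined with the adjacent $\Xi_{\pm m}(D)$ and $\Xi_{\pm m'}(-D)$: although $\Xi_\mu$ alone lies only in $C_{\rm b}(\R)$ and not in $C\big([-\infty,\infty]\big)$ — its modulus stays bounded but it oscillates at $\pm\infty$ through the prefactor $\e^{i\ln(2)\xi}$ — each of the products $\xi\mapsto\Xi_{1/2}(-\xi)\Xi_{\pm m}(\xi)$ and $\xi\mapsto\Xi_{\pm m'}(-\xi)\Xi_{1/2}(\xi)$ belongs to $C\big([-\infty,\infty]\big)$ by \eqref{eq_limits}, since $\tfrac12$, $\pm m$ and $\pm m'$ all have real part strictly greater than $-1$ for $(m,\kappa),(m',\kappa')\in\Omega_{\rm sa}$. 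Symmetrically, the possibly unbounded factors $\e^{2mX}$ and $\e^{2m'X}$ — which always carry a coefficient $\varsigma$, resp.\ $\varsigma'$ — are combined with the central rational factor $\frac{\e^{\mp i\frac{\pi}{2}m}}{1-\varsigma\e^{\mp i\pi m}\e^{2mX}}\cdot\frac{\e^{\pm i\frac{\pi}{2}m'}}{1-\varsigma'\e^{\pm i\pi m'}\e^{2m'X}}$; for $(m,\kappa),(m',\kappa')\in\Omega_{\rm sa}$ both denominators are bounded away from $0$ on $\R$ (when $m\in(-1,1)\setminus\{0\}$ and $\kappa\in\R$ the denominator never vanishes and tends to $1$ or to $\infty$ at $\pm\infty$, while when $m\in i\R^*$ and $|\kappa|=1$ the quantity $\varsigma\e^{\mp i\pi m}\e^{2mX}$ has constant modulus different from $1$), so the central factor, as well as $\varsigma\e^{2m\,\cdot}$, $\varsigma'\e^{2m'\,\cdot}$ or $\varsigma\varsigma'\e^{2(m+m')\,\cdot}$ times it, are all bounded and continuous on $\R$. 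Hence each of the four terms is a finite product of operators of the form $a(D)$ with $a\in C\big([-\infty,\infty]\big)$ and $b(X)$ with $b\in C_{\rm b}(\R)$, so it lies in the $C^*$-algebra $\E$, and therefore so does their sum $\W^{\mp}_{m,\kappa;m',\kappa'}$.

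The step I expect to be delicate is conceptual rather than computational. A priori the wave operator is \emph{not} an element of $\E$ — for instance the bare multipliers $\Xi_{\pm m}(D)$ are not — and the argument above goes through only because \eqref{expression in line} is written in a ``matched'' form in which the individually ill-behaved pieces $\Xi_{\bullet}(D)$ and $\e^{2\bullet X}$ occur exclusively in combinations belonging to the generating classes of $\E$. This is precisely why the boundary factors $\Xi_{1/2}(\mp D)$ are present, and hence why one inserts $\Xi_{1/2}(-D)$ into $\mathscr U$ in the first step instead of using the plainer unitary $VJ$ (which would give a valid unitary equivalence, but with a right-hand side not manifestly in $\E$). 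Apart from keeping this matching straight, the proof requires only the routine bookkeeping of phases and exponents in \eqref{expression in line} and the elementary denominator estimates quoted above.
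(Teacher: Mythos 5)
Your proposal is correct and follows essentially the same route as the paper: the unitary equivalence is obtained from the Stone--von Neumann theorem applied to the Weyl pair $\big(\ln(R/2),A\big)$ together with the unitary $V\Xi_{\frac{1}{2}}(-A)J$ (which coincides with your $\Xi_{\frac{1}{2}}(-D)VJ$), and the membership in $\E$ is deduced from \eqref{eq_limits} by grouping the oscillating factors $\Xi_{\bullet}$ into products lying in $C\big([-\infty,\infty]\big)$ and the $x$-dependent factors into bounded continuous functions. Your write-up merely spells out in more detail the denominator bounds and the ``matching'' of factors that the paper states in one line.
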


\begin{proof}
Let us set $B:=\frac{1}{2}\ln\big(\frac{R^2}{4}\big)$. Then, by a straightforward computation one infers
that for any $s,t\in\R$ the usual Weyl commutation relation holds, namely
\begin{equation*}
\e^{isB}\e^{itA}=\e^{-ist}\e^{itA}\e^{isB}.
\end{equation*}
It then follows from Stone-von Neumann theorem that there exists a unitary operator
$V:L^2(\R_+)\to L^2(\R)$ such that $VB=XV$ and $VA=DV$.
Now, let us recall that $J$ is a self-adjoint and unitary operator in $L^2(\R_+)$.
In addition, one easily infers from \eqref{eq_Xi} that $\Xi_{\frac{1}{2}}(A)$ is a unitary operator which satisfies
$\Xi_{\frac{1}{2}}(A)^* =\Xi_{\frac{1}{2}}(-A)$.
One then deduces that the operator
$V \Xi_{\frac{1}{2}}(-A) J:L^2(\R_+)\to L^2(\R)$ realizes the unitary equivalence between $W_{m,\kappa;m',\kappa'}^{\mp}$
and the operators provided in \eqref{expression in line}.
Note that a similar but more explicit transformation is also provided in \cite[App.~B]{DR}.

For the second part of the statement, by taking \eqref{eq_limits} into account,
one readily observes that the symbols of the operator given in \eqref{expression in line} belong to
$C\big([-\infty,\infty]\big)$ and to $C_{\rm b}(\R)$, respectively.
\end{proof}

In the sequel the representation \eqref{expression in line} in $L^2(\R)$ of the wave operator will constantly be used.
For that reason we denote these operators by the very similar notation $\W^{\mp}_{m,\kappa;m',\kappa'}$.
The projections on their kernel and cokernel will be denoted by $\I_{\rm p}(H_{m',\kappa'})$ and
$\I_{\rm p}(H_{m,\kappa})$, respectively. This notation is in accordance with \eqref{projections}.

Observe now that the $C^*$-algebra $\E$ in the above theorem has a non-trivial ideal $\J$
generated by $a(D)b(X)$ with $a\in C_0(\R)$ and $b\in C_{\rm b}(\R)$. Here $C_0(\R)$
denotes the algebra of continuous functions on $\R$ converging to $0$ at $\pm \infty$.
Then, one has $\E/\J\cong C_{\rm b}(\R)\oplus C_{\rm b}(\R)$ and the corresponding quotient map $q_{{}_\J}$ is
given by the evaluation of Fourier multipliers at $\pm\infty$, \ie~$a(D)\mapsto \big(a(-\infty),a(+\infty)\big)$.
By a direct computation, we obtain
\begin{equation}\label{eq_2_limits}
q_{{}_\J}\big(\W_{m,\kappa;m'\kappa'}^{-}\big)=\big(\SS_{m,\kappa;m',\kappa'},1\big)
\end{equation}
with
\begin{equation}\label{eq_S}
\SS_{m,\kappa;m',\kappa'}(x):=\e^{-i\pi(m-m')}\frac{1-\varsigma \e^{ i\pi m}\e^{2mx}}{1-\varsigma \e^{- i\pi m}\e^{2mx}}
\ \frac{1-\varsigma' \e^{- i\pi m'}\e^{2m'x}}{1-\varsigma' \e^{ i\pi m'}\e^{2m'x}}\ .
\end{equation}
This operator is in fact unitarily equivalent to the scattering operator for the pair of operators $(H_{m,\kappa},H_{m',\kappa'})$.
Indeed, by definition the scattering operator $S_{m,\kappa;m',\kappa'}$ is given by the product $W_{m,\kappa;m'\kappa'}^{+*}W_{m,\kappa;m'\kappa'}^{-}$
and is a unitary operator in $\one_{\rm c}(H_{m',\kappa'})L^2(\R_+)$.
By taking the points (i) and (ii) of Proposition \ref{properties} into account one infers
that the operator $\F_{m',\kappa'}^{+ *}S_{m,\kappa;m',\kappa'}\F_{m',\kappa'}^+$ is a unitary operator in $L^2(\R_+)$,
and as shown in \cite[Sec.~6.5]{DR} one has
\begin{equation}\label{eq_true_S}
\F_{m',\kappa'}^{+ *}S_{m,\kappa;m',\kappa'}\F_{m',\kappa'}^+
= \e^{- i \pi(m-m')}\;\!\frac{1-\varsigma\e^{i\pi m}(\frac{R}{2})^{2m}}{1-\varsigma\e^{-i\pi m}(\frac{R}{2})^{2m}}
\ \frac{1-\varsigma'\e^{-i\pi m'}(\frac{R}{2})^{2m'}}{1-\varsigma'\e^{i\pi m'}(\frac{R}{2})^{2m'}}.
\end{equation}
Finally, by using the operator $B$ and the unitary transformation $V$ introduced in the proof of Proposition \ref{prop_unit_eq}
one obtains that \eqref{eq_true_S} is unitarily equivalent to \eqref{eq_S}.

It turns out that the algebra $\E$ introduced in Proposition \ref{prop_unit_eq} is too large for formulating an index theorem.
In the following sections we construct more suitable algebras which depend
on the choice of parameters $(m,\kappa)$ and $(m',\kappa')$.
Also, since we shall consider only the operator $\W_{m,\kappa;m'\kappa'}^{-}$ and not $\W_{m,\kappa;m'\kappa'}^{+}$,
we shall choose an algebra in which the superfluous $1$ in \eqref{eq_2_limits}
does not appear any more.

\section{The Fredholm case}\label{sec_Fredholm}

This situation is the simplest one and has already been thoroughly investigated in \cite{NPR},
with a special emphasize on the non self-adjoint cases.
Here we briefly introduce a slightly updated version the main result in the self-adjoint setting.

For this case we consider the pairs of parameters $(m,\kappa)$ and $(\frac{1}{2},0)$ with
$m\in (0,1)$ and $\kappa \in \R$.
Note that there is no loss of generality in considering $m>0$ since the relation $H_{-m,\kappa}=H_{m,\kappa^{-1}}$ holds
(in \cite{DR} an operator $H_{m,\infty}$ is also introduced but it has not been recalled here for simplicity).
For these parameters the operator $H_{m,\kappa}$ is self-adjoint with a finite number of eigenvalues of finite multiplicity,
and $H_{\frac{1}{2},0}\equiv H_{\rm D}$ corresponds to the Dirichlet Laplacian, with no eigenvalue.
The corresponding wave operator $\W^-_{m,\kappa;\frac{1}{2},0}$ is then an isometry with a finite dimensional cokernel,
and is thus a Fredholm operator.
Its analytic index is given by minus the number of eigenvalues of the operator $H_{m,\kappa}$.

For the pair of operators $(H_{m,\kappa},H_{\rm D})$ we consider the unital $C^*$-subalgebra $\E_{(X,D)}$ of $\B\big(L^2(\R)\big)$
defined by
$$
\E_{(X,D)}:=C^*\Big(a(D)b(X)\mid a\in C_0\big([-\infty,\infty)\big), b\in C\big([-\infty,+\infty]\big) \Big)^+.
$$
Here $C_0\big([-\infty,\infty)\big)$ denotes the set of continuous functions on $\R$ having a limit at $-\infty$
and vanishing at $+\infty$.
By observing then that the ideal of compact operators $\K_\R:=\K\big(L^2(\R)\big)$ corresponds to the $C^*$-algebra generated
by products of the form $a(D)b(X)$ with $a,b\in C_0(\R)$,
one easily infers that  $\E_{(X,D)}/\K_{\R}$ is isomorphic to $C(\triangle)$,
the algebra of continuous functions on the boundary $\triangle$
of the closed triangle $\blacktriangle$.
Note that the unital quotient morphism $q_{{}_{\K_\R}}:\E_{(X,D)}\to C(\triangle)$ is uniquely determined by
$$
q_{{}_{\K_\R}}\big(a(D)b(X)\big) = \big(a(\cdot)b(-\infty),\;\! a(-\infty)b(\cdot),\;\!a(\cdot)b(+\infty)\big).
$$

The next step consists in observing that for $m\in (0,1)$ and $\kappa\in \R$ the operator $\W_{m,\kappa;\frac{1}{2},0}^{-}$
belongs to $\E_{(X,D)}$.
In addition, by a direct computation one gets for $\kappa\neq 0$
\begin{equation}\label{eq_tri}
q_{{}_{\K_\R}}\big(\W_{m,\kappa;\frac{1}{2},0}^{-}\big)
= \Big(\e^{i\frac{\pi}{2}(\frac{1}{2}-m)}\Xi_{\frac{1}{2}}(-\cdot)\Xi_m(\cdot),\;\! \SS_{m,\kappa;\frac{1}{2},0},
\;\! \e^{i\frac{\pi}{2}(\frac{1}{2}+m)}\Xi_{\frac{1}{2}}(-\cdot)\Xi_{-m}(\cdot) \Big)
\end{equation}
with $\SS_{m,\kappa;\frac{1}{2},0}(x) = \e^{-i\pi(m-\frac{1}{2})}\frac{(1-\varsigma \e^{ i\pi m}\e^{2mx})}{(1-\varsigma \e^{- i\pi m}\e^{2mx})}$
introduced in \eqref{eq_S}.
In the special case $\kappa =0$ one has
\begin{equation}\label{eq_tri_2}
q_{{}_{\K_\R}}\big(\W_{m,0;\frac{1}{2},0}^{-}\big)
= \Big(\e^{i\frac{\pi}{2}(\frac{1}{2}-m)}\Xi_{\frac{1}{2}}(-\cdot)\Xi_m(\cdot),\;\! \SS_{m,0;\frac{1}{2},0},
\;\! \e^{i\frac{\pi}{2}(\frac{1}{2}-m)}\Xi_{\frac{1}{2}}(-\cdot)\Xi_{m}(\cdot) \Big).
\end{equation}
For simplicity we denote by $\Gamma_{m,\kappa;\frac{1}{2},0}^{\triangle}$ the r.h.s.~of \eqref{eq_tri}
and $\Gamma_{m,0;\frac{1}{2},0}^{\triangle}$ the r.h.s.~of \eqref{eq_tri_2}.
Since the operator is $\W_{m,\kappa;\frac{1}{2},0}^{-}$ is Fredholm, it follows that
$\Gamma_{m,\kappa;\frac{1}{2},0}^{\triangle}$ is a function taking values in $\T:=\{z\in \C\mid |z|=1\}$.

We finally observe that $\triangle$ is homeomorphic to the $1$-circle $\S$. We can then define the winding number
of any invertible function $f\in C(\triangle)$, which we denote by $\wn_{\triangle}(f)$.
By convention we compute the winding number by turning anticlockwise around $\S$ and the increase
in the winding number is also counted anticlockwise.
Rephrasing then for the self-adjoint operators $(H_{m,\kappa},H_{\rm D})$ the content of \cite[Thm.~2]{NPR}
in the framework introduced above, we get:

\begin{theorem}\label{index theorem 3}
For any $m\in (0,1)$ and for any $\kappa\in \R$ one has
\begin{equation}\label{eq_Lev}
\wn_{\triangle}\big(\Gamma_{m,\kappa;\frac{1}{2},0}^{\triangle}\big)=
\text{number of eigenvalues of}\ H_{m,\kappa}=
-\Index\big(\W_{m,\kappa;\frac{1}{2},0}^{-}\big).
\end{equation}
\end{theorem}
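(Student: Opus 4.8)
The plan is to decouple the two equalities in \eqref{eq_Lev}. The right-hand equality $-\Index\big(\W_{m,\kappa;\frac{1}{2},0}^{-}\big)=\#\sigma_{\rm p}(H_{m,\kappa})$ is purely operator-theoretic and follows from the structure of the wave operator recalled in Section \ref{sec_model}, whereas the left-hand equality $\wn_{\triangle}\big(\Gamma_{m,\kappa;\frac{1}{2},0}^{\triangle}\big)=\#\sigma_{\rm p}(H_{m,\kappa})$ is the self-adjoint specialisation of \cite[Thm.~2]{NPR}, which I would reprove here through the index map of the Toeplitz-type extension $0\to\K_\R\to\E_{(X,D)}\to C(\triangle)\to0$.

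For the right-hand equality, recall that $H_{\frac{1}{2},0}=H_{\rm D}$ has empty point spectrum (the case $\kappa=0$ of Theorem \ref{spectral theory}), hence $\one_{\rm c}(H_{\rm D})=\one$, and Proposition \ref{properties}(iii) yields $\W^{-*}\W^{-}=\one$; thus $\W_{m,\kappa;\frac{1}{2},0}^{-}$ is an isometry, in particular $\ker\W_{m,\kappa;\frac{1}{2},0}^{-}=\{0\}$. The $L^2(\R)$-version of \eqref{projections} gives $\big[\W^{-},\W^{-*}\big]=\I_{\rm p}(H_{\rm D})-\I_{\rm p}(H_{m,\kappa})=-\I_{\rm p}(H_{m,\kappa})$, so that $\W^{-}\W^{-*}=\one-\I_{\rm p}(H_{m,\kappa})$ and the cokernel of $\W_{m,\kappa;\frac{1}{2},0}^{-}$ is isomorphic to the range of $\I_{\rm p}(H_{m,\kappa})$. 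By Theorem \ref{spectral theory}(i) this range is finite-dimensional, its dimension being the number of (simple) eigenvalues of $H_{m,\kappa}$; therefore $\W_{m,\kappa;\frac{1}{2},0}^{-}$ is Fredholm and $\Index\big(\W_{m,\kappa;\frac{1}{2},0}^{-}\big)=-\#\sigma_{\rm p}(H_{m,\kappa})$.

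For the left-hand equality I would use that $\W_{m,\kappa;\frac{1}{2},0}^{-}\in\E_{(X,D)}$ and that its symbol $\Gamma_{m,\kappa;\frac{1}{2},0}^{\triangle}=q_{{}_{\K_\R}}\big(\W_{m,\kappa;\frac{1}{2},0}^{-}\big)$ is invertible in $C(\triangle)$, being $\T$-valued. Thus $\Gamma_{m,\kappa;\frac{1}{2},0}^{\triangle}$ determines a class in $K_1\big(C(\triangle)\big)$ admitting a Fredholm lift, and the standard identification of the index map $\partial_1\colon K_1\big(C(\triangle)\big)\to K_0(\K_\R)\cong\Z$ with the analytic index gives $\Index\big(\W_{m,\kappa;\frac{1}{2},0}^{-}\big)=\partial_1\big(\big[\Gamma_{m,\kappa;\frac{1}{2},0}^{\triangle}\big]\big)$ up to the usual sign. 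Since $\triangle$ is homeomorphic to $\S$, one has $K_1\big(C(\triangle)\big)\cong K_1\big(C(\S)\big)\cong\Z$, a generator being any $\T$-valued function of winding number $1$ and the corresponding invariant being $\wn_{\triangle}$. Evaluating $\partial_1$ on such a generator --- for instance by exhibiting an explicit Toeplitz-type Fredholm lift built from a Fourier multiplier $a(D)$ with $a\in C\big([-\infty,+\infty]\big)$ of winding number $1$, whose index is $-1$ by the Gohberg--Krein formula --- shows that, with the orientation convention fixed in the statement, the composite map equals $-\wn_{\triangle}$. Together with the previous paragraph this yields $\wn_{\triangle}\big(\Gamma_{m,\kappa;\frac{1}{2},0}^{\triangle}\big)=-\Index\big(\W_{m,\kappa;\frac{1}{2},0}^{-}\big)=\#\sigma_{\rm p}(H_{m,\kappa})$. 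As an alternative to this $K$-theoretic step, one could instead verify the left-hand equality by a direct computation, tracking the phase variation of the three components of \eqref{eq_tri} along the three edges of $\triangle$ and comparing with the explicit eigenvalue list of Theorem \ref{spectral theory}.

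The step I expect to be the main obstacle is the bookkeeping of signs and orientations: one has to fix simultaneously the homeomorphism $\triangle\cong\S$, the induced orientation of the three edges of $\triangle$ consistent with the anticlockwise convention of the statement, and the sign convention in the six-term exact sequence, so that no spurious minus sign survives in front of $\wn_{\triangle}$ in the final identity. A related technical point is to check that the three boundary pieces --- two coming from $\Xi_{\frac{1}{2}}(-\,\cdot\,)\Xi_{\pm m}(\cdot)$ on the ``horizontal'' edges and one from $\SS_{m,\kappa;\frac{1}{2},0}$ on the ``vertical'' edge --- glue into a genuine closed loop on $\triangle$, which relies on the limit relations \eqref{eq_limits}.
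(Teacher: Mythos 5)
Your proposal is correct and takes essentially the paper's route: the right-hand equality is obtained exactly as you do it (the wave operator is an isometry since $H_{\rm D}$ has no point spectrum, and by \eqref{projections} its cokernel is $\I_{\rm p}(H_{m,\kappa})L^2(\R)$, finite-dimensional by Theorem \ref{spectral theory}(i)), while the left-hand equality is delegated to \cite[Thm.~2]{NPR}, whose proof is precisely the index-map computation for the extension $0\to\K_\R\to\E_{(X,D)}\to C(\triangle)\to 0$ that you outline. The only caveat concerns your illustrative aside: a Fourier multiplier $a(D)$ on $L^2(\R)$ with invertible symbol is itself invertible (index $0$), so the element on which $\partial_1$ is evaluated must be a genuine Toeplitz/Wiener--Hopf-type element of $\E_{(X,D)}$ (a product $a(D)b(X)$ whose triangle-valued symbol generates $K_1\big(C(\triangle)\big)$), as in \cite{KR}; with that repaired, the sign and orientation bookkeeping you flag is indeed the only delicate point.
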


\begin{remark}
The previous statement corresponds to what is called \emph{a topological version of Levinson's theorem}, see \cite{R}.
Indeed, the function on the lower edge of the triangle $\triangle$ is given by the scattering
operator introduced in \eqref{eq_S} and the functions on the other edges define corrections at threshold energies.
Thus, equality \eqref{eq_Lev} corresponds to a relation between the scattering operator
and the number of eigenvalues of $H_{m,\kappa}$, with some natural corrections taken into account.
Note that this type of relations first appeared in \cite{Lev} and had been widely popularized since then.
\end{remark}

\section{The semi-Fredholm case (I)}\label{sec_semi_I}

In this section we consider the pair of operators $(H_{in,\kappa},H_{\rm D})$ with $n\in\R^*$ and $\kappa\in\T$.
In fact, by taking the equality $H_{in,\kappa}=H_{-in,\kappa^{-1}}$ into account, one can focuss without loss of generality
on the case $n>0$. For such a pair of operators the wave operator $\W^-_{in,\kappa;\frac{1}{2},0}$ is an isometry with an
infinite dimensional cokernel $\I_{\rm p}(H_{in,\kappa})L^2(\R)$.

In the present situation the key observation based on \eqref{expression in line} is that the following equality holds
\begin{equation*}
\W^-_{in,\kappa;\frac{1}{2},0} = \e^{ i\frac{\pi}{4}} \Xi_\frac{1}{2}(-D)\left(\Xi_{in}(D)-\varsigma\Xi_{-in}(D)\e^{2inX}\right)
\frac{\e^{\frac{\pi}{2}n}}{1-\varsigma \e^{\pi n}\e^{2inX}}
\end{equation*}
and that the corresponding symbol is periodic in $x$. Hence, a suitable $C^*$-algebra $\E_n$ for nesting this operator
is the unitization of the $C^*$-subalgebra generated by $a(D)b(X)$ with $a\in C_0\big([-\infty,\infty)\big)$
and $b\in C_{\frac{\pi}{n}}(\R)$, namely
\begin{equation}\label{eq_En}
\E_n:=C^*\Big(a(D)b(X)\mid a\in C_0\big([-\infty,+\infty)\big),b\in C_{\frac{\pi}{n}}(\R) \Big)^+.
\end{equation}
Here $C_{\frac{\pi}{n}}(\R)$ denotes the set of all continuous periodic functions on $\R$ with period $\frac{\pi}{n}$.
Clearly, the algebra $\E_n$ is the unitization of an extension of $C_{\frac{\pi}{n}}(\R)$ by the ideal
$\J_n$ defined by
\begin{equation}\label{eq_Jn}
\J_n:=C^*\Big(a(D)b(X)\mid a\in C_0(\R),b\in C_{\frac{\pi}{n}}(\R) \Big).
\end{equation}
Since $C_{\frac{\pi}{n}}(\R)$ can naturally be identified with $C(\S)$,
we define through this identification the winding number  $\wn_{\frac{\pi}{n}}(f)$ of any invertible element
$f\in C_{\frac{\pi}{n}}(\R)$, with the conventions introduced in the previous section.

In order to define an analytic index for $\W^-_{in,\kappa;\frac{1}{2},0}$ we recall a direct integral decomposition of $L^2(\R)$
useful for periodic systems, the so-called \emph{Floquet-Bloch decomposition} introduced for example in \cite[Sec.XIII.16]{RS}.
For each $\theta\in [0,2n)$ we set $\H_\theta:=L^2\big([0,\frac{\pi}{n}], \d x\big)$ endowed with the usual Lebesgue measure, and also define
\begin{equation*}
\H_n:=\int_{[0,2n)}^{\oplus}\H_\theta \frac{\d\theta}{2n}.
\end{equation*}
Then, if $\SS(\R)$ denotes the Schwartz space on $\R$, the map $\U_n:L^2(\R)\to\H_n$ defined for $\theta\in [0,2n)$ and $x\in[0,\frac{\pi}{n})$ by
\begin{equation*}
[\U_nf](\theta,x):=\sum_{k\in\Z}\e^{-i\frac{\pi}{n}k\theta}f\Big(x+\frac{\pi}{n}k\Big) \qquad \forall f \in \SS(\R),
\end{equation*}
extends continuously to a unitary operator.
The adjoint operator is then given by the formula
$$
[\U_n^*\varphi]\big(x+\frac{\pi}{n}k\big) = \int_0^{2n}\e^{i\frac{\pi}{n}k\theta}\varphi(\theta,x)\frac{\d \theta}{2n}.
$$
Moreover, one has
\begin{equation*}
\U_n D\U_n^*=\int_{[0,2n)}^{\oplus} D_x^{(\theta)} \frac{\d\theta}{2n},
\end{equation*}
where $D_x^{(\theta)}$ is the operator $-i\frac{\d}{\d x}$ on a fiber $\H_\theta$
with boundary condition $f(\frac{\pi}{n})=\e^{i\frac{\pi}{n} \theta}f(0)$.

Thus, for any operator of the form $a(D)b(X)$ with $a\in C_0\big([-\infty,\infty)\big)$
and $b\in C_{\frac{\pi}{n}}(\R)$, the operator $\U_na(D)b(X)\U_n^*$ is a decomposable operator with the fibers $a\big(D_x^{(\theta)}\big)b(X)$.
On suitable bounded decomposable operator $\Phi=\int_{[0,2n)}^{\oplus}\Phi(\theta) \frac{\d\theta}{2n}$
we also define the trace $\Trace_n$ by
\begin{equation*}
\Trace_n(\Phi)=\int_{0}^{2n}\trace_\theta\big(\Phi(\theta)\big) \frac{\d\theta}{2n}
\end{equation*}
where $\trace_\theta$ is the usual trace on $\H_\theta$.
The main result for the semi-Fredholm operator $\W_{in,\kappa;\frac{1}{2},0}^-$ reads:

\begin{theorem}\label{index theorem 4}
Let $n>0$ and $\kappa\in \T$. Then,
\begin{equation*}
\wn_{\frac{\pi}{n}}\big(\SS_{in,\kappa;\frac{1}{2},0}\big)=- 1
=\Trace_n \big(\big[\W_{in,\kappa;\frac{1}{2},0}^-,\W_{in,\kappa;\frac{1}{2},0}^{-*}\big]\big).
\end{equation*}
\end{theorem}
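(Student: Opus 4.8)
The plan is to prove the two equalities separately and then concatenate them, namely the computational identity $\wn_{\frac{\pi}{n}}\big(\SS_{in,\kappa;\frac{1}{2},0}\big)=-1$ and the operator-theoretic identity $\Trace_n\big(\big[\W_{in,\kappa;\frac{1}{2},0}^-,\W_{in,\kappa;\frac{1}{2},0}^{-*}\big]\big)=-1$. For the first one, specializing \eqref{eq_S} to the pair $(in,\kappa;\frac{1}{2},0)$ and using $\varsigma'=0$ together with $\e^{\mp i\pi(in)}=\e^{\pm\pi n}$ one obtains
$$
\SS_{in,\kappa;\frac{1}{2},0}(x)=i\,\e^{\pi n}\;\frac{1-\varsigma\,\e^{-\pi n}\e^{2inx}}{1-\varsigma\,\e^{\pi n}\e^{2inx}}\,,
$$
and since $m=in$ is purely imaginary and $|\kappa|=1$ one has $|\varsigma|=1$. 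Through the identification $C_{\frac{\pi}{n}}(\R)\cong C(\S)$, as $x$ runs once over a period the point $u:=\varsigma\e^{2inx}$ runs once anticlockwise around $\S$, so $\SS_{in,\kappa;\frac{1}{2},0}$ is realized as the rational function $u\mapsto i\e^{\pi n}\frac{1-\e^{-\pi n}u}{1-\e^{\pi n}u}$, whose only zero $u=\e^{\pi n}$ lies outside the closed unit disc while its only pole $u=\e^{-\pi n}$ lies inside. By the argument principle its winding number equals $0-1=-1$ (the nonzero constant prefactor being irrelevant), which is the first equality.

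For the second equality, observe first that $H_{\rm D}$ has no eigenvalue, so $\I_{\rm p}(H_{\rm D})=0$ and \eqref{projections} gives $\big[\W_{in,\kappa;\frac{1}{2},0}^-,\W_{in,\kappa;\frac{1}{2},0}^{-*}\big]=-\I_{\rm p}(H_{in,\kappa})$; in particular $\W_{in,\kappa;\frac{1}{2},0}^-$ is an isometry. By \eqref{expression in line} this operator is a product of a Fourier multiplier and a $\frac{\pi}{n}$-periodic multiplication operator, so it commutes with the translation $x\mapsto x+\frac{\pi}{n}$; hence $\I_{\rm p}(H_{in,\kappa})$ is decomposable in the Floquet--Bloch decomposition, $\U_n\I_{\rm p}(H_{in,\kappa})\U_n^*=\int_{[0,2n)}^{\oplus}\I_{\rm p}(\theta)\,\frac{\d\theta}{2n}$ with $\I_{\rm p}(\theta)=\one_{\H_\theta}-\W^-(\theta)\W^-(\theta)^*$ the cokernel projection of the fibre isometry $\W^-(\theta)$. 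Moreover the commutator belongs to $\J_n$, its image in the commutative quotient $\E_n/\J_n\cong C_{\frac{\pi}{n}}(\R)$ being $0$; hence $\theta\mapsto\I_{\rm p}(\theta)$ is a norm-continuous field of finite-rank projections and each $\W^-(\theta)$ is Fredholm. Being norm-continuous and $2n$-periodic, this field has constant rank $c\in\N$, so $\Trace_n\big(\I_{\rm p}(H_{in,\kappa})\big)=c$, and it remains to prove that $c=1$.

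This is where the structure of the model is used. The potential $1/r^2$ is scale-invariant and, because $(\e^{\pi/n})^{2in}=1$, the boundary condition defining $H_{in,\kappa}$ is preserved by the dilation $U_{\pi/n}$; therefore $U_{\pi/n}H_{in,\kappa}U_{\pi/n}^{-1}=\e^{-2\pi/n}H_{in,\kappa}$, so $U_{\pi/n}$ maps the eigenspace of $H_{in,\kappa}$ for $E_j=-4\e^{-(\arg\varsigma+2\pi j)/n}$ (one-dimensional, the eigenvalues being simple) onto the one for $E_{j-1}$. Consequently the eigenfunctions of $H_{in,\kappa}$ form a single $\Z$-orbit $\{U_{\pi/n}^{\,j}\psi_0\}_{j\in\Z}$. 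The unitary $V\Xi_{\frac{1}{2}}(-A)J$ of Proposition \ref{prop_unit_eq} intertwines $U_{\pi/n}$ with the translation $\tau:x\mapsto x+\frac{\pi}{n}$ on $L^2(\R)$ --- since $J$ conjugates $U_t$ into $U_{-t}$, $\Xi_{\frac{1}{2}}(-A)$ commutes with $U_t$, and $V$ intertwines $A$ with $D$ --- so the range of $\I_{\rm p}(H_{in,\kappa})$ in $L^2(\R)$ is the closed span of a single $\Z$-orbit $\{\tau^j\tilde\psi_0\}_{j\in\Z}$ of one vector $\tilde\psi_0$. As $\tau$ acts in the fibre $\H_\theta$ as a scalar of modulus $1$, the fibre $\I_{\rm p}(\theta)$ is contained in $\C\,\tilde\psi_0(\theta)$, whence $c=\operatorname{rank}\I_{\rm p}(\theta)\le1$; together with $c\ge1$ (because $\I_{\rm p}(H_{in,\kappa})\neq0$) this forces $c=1$, so $\Trace_n\big(\I_{\rm p}(H_{in,\kappa})\big)=1$ and $\Trace_n\big(\big[\W_{in,\kappa;\frac{1}{2},0}^-,\W_{in,\kappa;\frac{1}{2},0}^{-*}\big]\big)=-1$, as required.

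The main obstacle is this last step: making the scale-covariance of $H_{in,\kappa}$ rigorous at the level of operator domains, deducing from it the single-orbit description of the eigenfunctions, and transporting this $\Z$-action correctly through the composite unitary of Proposition \ref{prop_unit_eq}; the remaining points are routine. One can instead work entirely on the Floquet--Bloch fibres, identify $\W^-(\theta)$ as a Toeplitz-type operator and apply the Gohberg--Krein index formula, the delicate issue then being the orientation of the pertinent half-line in the Fourier variable, which reverses a sign so that $-\Index\big(\W^-(\theta)\big)=\wn_{\frac{\pi}{n}}\big(\SS_{in,\kappa;\frac{1}{2},0}\big)=-1$. Either way the common value $-1$ is the concrete incarnation, for this model, of the identity ``$\Z$-index $=$ Brillouin-zone average of the Toeplitz winding number of the symbol'' in the spirit of \cite{Atiyah}.
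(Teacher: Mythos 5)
Your proof of the first equality is essentially the paper's: both reduce $\SS_{in,\kappa;\frac{1}{2},0}$ to a M\"obius-type function of $u=\varsigma\e^{2inx}$ with $|\varsigma|=1$ and read off the winding number $-1$ (the paper factors out $-i\overline{\varsigma}\e^{-2inx}$ times a non-winding unimodular square, you invoke the argument principle; same content). For the second equality, however, you take a genuinely different route. The paper computes $\Trace_n\big([\W^-,\W^{-*}]\big)$ by brute force from \eqref{expression in line}: it conjugates by $\Xi_{in}(D)^{-1}\Xi_{\frac12}(D)$, expands the periodic function $F_{in,\kappa}$ in a Fourier series, uses the identity $\e^{2inX}G_n^-(D)+G_n^+(D)\e^{2inX}=2\cosh(\pi n)\e^{2inX}$ to reduce everything to two commutator terms $I_{in,\kappa}$ and $J_{in,\kappa}$, applies the trace formula \eqref{formula of tracen}, and finally evaluates the Fourier coefficient $c_{-1}=\overline{\varsigma}/(\e^{\pi n}-\e^{-\pi n})$, so that $-\varsigma c_{-1}(\e^{\pi n}-\e^{-\pi n})=-1$. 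You instead identify the commutator with $-\I_{\rm p}(H_{in,\kappa})$ and compute its $\Trace_n$ structurally: the covariance $U_{\pi/n}H_{in,\kappa}U_{\pi/n}^{-1}=\e^{-2\pi/n}H_{in,\kappa}$ (which does hold --- $U_t$ maps $\D(H_{m,\kappa})$ to $\D(H_{m,\kappa\e^{-2tm}})$, and $\e^{-2i\pi}=1$) together with the simplicity of the eigenvalues shows the point subspace is a single $\Z$-orbit; the composite unitary of Proposition \ref{prop_unit_eq} carries $U_{\pi/n}$ to translation by $\pi/n$ (your bookkeeping $JU_t=U_{-t}J$, $VA=DV$ is correct), so the Floquet--Bloch fibres of $\I_{\rm p}(H_{in,\kappa})$ have rank at most one, and constant rank one by norm-continuity of the fibre field. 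This buys a conceptual explanation of why the trace is exactly $1$ (``one eigenvalue per period of the dilation ladder''), at the price of three inputs the paper never needs: simplicity of the eigenvalues of $H_{in,\kappa}$ (true, by the limit-point behaviour at infinity, but only asserted parenthetically by you), the domain-level verification of the dilation covariance, and the fact that $\J_n$ is a continuous field of compact operators over the Brillouin circle so that the rank is genuinely constant rather than merely $\le 1$ almost everywhere. All three can be supplied, so your argument is sound; the paper's computation has the advantage of being entirely self-contained given the explicit formula \eqref{expression in line}, and of exercising the trace formula \eqref{formula of tracen} that is reused in Sections \ref{sec_semi_II} and \ref{sec_ap}.
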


\begin{proof}
We directly infer from the equalities
\begin{equation*}
\SS_{in,\kappa;\frac{1}{2},0}(x)=i\e^{\pi n}\frac{1-\varsigma \e^{ -\pi n}\e^{2inx}}{1-\varsigma \e^{ \pi n}\e^{2inx}}
=-i\overline{\varsigma}\e^{-2inx}\;\! \left(\frac{1-\varsigma \e^{-\pi n}\e^{2in x}}{|1-\varsigma \e^{-\pi n}\e^{2in x}|}\right)^2
\end{equation*}
that the winding number $\wn_{\frac{\pi}{n}}\big(\SS_{in,\kappa;\frac{1}{2},0}\big)$ is $-1$.

Let us now consider an operator of the form $a(D)b(X)$ with $a\in C_0(\R)$
and $b\in C_{\frac{\pi}{n}}(\R)$, and the corresponding operator $a\big(D_x^{(\theta)}\big)b(X)$.
Since the eigenfunctions of the operator $D_x^{(\theta)}$ are provided by the functions
$$
[0,\frac{\pi}{n})\ni x \mapsto \sqrt{{\frac{n}{\pi}}}\e^{i(\theta + 2nk)x}\in \C, \qquad k \in \Z
$$
we infer that the Schwartz kernel of the operator $a\big(D_x^{(\theta)}\big)b(X)$ is given by
\begin{equation*}
K_{a(D_x^{(\theta)})b(X)}(x,y)=\frac{n}{\pi}\sum_{k\in\Z}a(\theta+2nk) \e^{i(\theta+2nk)(x-y)} b(y).
\end{equation*}
Thus, if $b(X)a\big(D_x^{(\theta)}\big)$ is
$\trace_\theta$-trace class for a.e.~$\theta \in [0,2n)$ we obtain
\begin{equation*}
\trace_\theta\big(a\big(D_x^{(\theta)}\big)b(X)\big)
=\sum_{k\in\Z}a(\theta+2nk)\times \frac{n}{\pi}\int_{0}^{\frac{\pi}{n}}b(x)\;\!\d x
\end{equation*}
and then
\begin{equation}\label{formula of tracen}
\Trace_n\big(a(D)b(X)\big)=\frac{1}{2n}\int_{\R}a(\xi)\;\!\d\xi\times \frac{n}{\pi}\int_{0}^{\frac{\pi}{n}}b(x)\;\!\d x.
\end{equation}
Note that these formulas are valid if $a$ has a fast enough decay, which will be the case in the sequel.
In addition, note also that the last term depends only on the $0$-th Fourier coefficient of the function $b$.
Our next aim is thus to show that $[\W_{in,\kappa;\frac{1}{2},0}^-,\W_{in,\kappa;\frac{1}{2},0}^{-*}] = -\I_{\rm p}(H_{in,\kappa})$
can be rewritten in the above form.

Recall now that $\Xi_{\frac{1}{2}}(D)$ is a unitary operator, with $\Xi_{\frac{1}{2}}(D)^*=\Xi_{\frac{1}{2}}(-D)$.
One also infers from the definition in \eqref{eq_Xi} that $\Xi_{in}(D)$ is invertible with $\Xi_{in}(D)^{-1} = \Xi_{in}(-D)$
and that $\Xi_{in}(D)^* = \Xi_{-in}(-D)$.
Then, by taking the statement (iii) of Proposition \ref{properties} into account one gets
\begin{align*}
& \Xi_{in}(D)^{-1} \Xi_{\frac{1}{2}}(D)\Big(\W_{in,\kappa;\frac{1}{2},0}^- \W_{in,\kappa;\frac{1}{2},0}^{-*}
- \W_{in,\kappa;\frac{1}{2},0}^{-*} \W_{in,\kappa;\frac{1}{2},0}^-\Big)\Xi_{\frac{1}{2}}(D)^* \Xi_{in}(D)\\
& = \big(\one-\varsigma G^+_n(D)\e^{2inX}\big)F_{in,\kappa}(X)\big(\varsigma \e^{2inX}G^-_n(D)-\one\big)-\one,
\end{align*}
where
\begin{equation*}
F_{in,\kappa}(X):=\frac{-1}{(1 -\varsigma \e^{- \pi n}\e^{2inX})(1-\varsigma \e^{\pi n}\e^{2inX})}
\end{equation*}
and
\begin{equation*}
G_n^{\pm}(\xi):=\Xi_{\pm in}(-\xi)\Xi_{\mp in}(\xi).
\end{equation*}
From the identity $\Gamma(z+\frac{1}{2})\Gamma(-z+\frac{1}{2})=\frac{\pi}{\cos(\pi z)}$ one then infers that
$$
G_n^{\pm}(\xi)=\e^{\pm \pi n}\frac{\e^{\pi \xi}+\e^{\mp \pi n}}{\e^{\pi \xi}+\e^{\pm \pi n}},
$$
and by taking into account the identity \cite[(6.10)]{DR} written in our framework, namely
\begin{equation*}
\e^{2inX}G_n^-(D)+G_n^+(D)\e^{2inX}=2\cosh(\pi n)\e^{2inX},
\end{equation*}
one then gets
\begin{align*}
& \big(\one-\varsigma G^+_n(D)\e^{2inX}\big)F_{in,\kappa}(X)\big(\varsigma \e^{2inX}G^-_n(D)-\one\big)-\one \\
& = -  F_{in,\kappa}(X) - \varsigma^2G_n^+(D)\e^{2inX}F_{in,\kappa}(X)\e^{2inX}G_n^-(D) \\
&\quad +\varsigma F_{in,\kappa}(X)\e^{2inX}G_n^-(D) + \varsigma G_n^+(D)\e^{2in X}F_{in,\kappa}(X) -\one\\
&=- F_{in,\kappa}(X)-\varsigma^2\e^{2inX}F_{in,\kappa}(X)\e^{2inX}-\varsigma^2G_n^+(D)\big[\e^{2inX}F_{in,\kappa}(X)\e^{2inX},G_n^-(D)\big] \\
&\quad + \varsigma F_{in,\kappa}(X)\big(2\cosh(\pi n) \e^{2inX} -G_n^+(D)\e^{2in X}\big)+ \varsigma G_n^+(D)\e^{2in X}F_{in,\kappa}(X) -\one\\
&=-F_{in,\kappa}(X)\big(1- 2 \varsigma \cosh (\pi n) \e^{2inX} + \varsigma^2\e^{4inX}\big) -\one\\
&\quad -\varsigma^2G_n^+(D)\big[\e^{2inX}F_{in,\kappa}(X)\e^{2inX},G_n^-(D)\big]  - \varsigma \big[F_{in,\kappa}(X),G_n^+(D)\big]\e^{2in X}\\
&= - \underbrace{ \varsigma^2 G_n^+(D)\big[\e^{2inX}F_{in,\kappa}(X)\e^{2inX},G_n^-(D)\big]}_{=:I_{in,\kappa}}
- \underbrace{\varsigma \big[F_{in,\kappa}(X),G_n^+(D)\big]\e^{2in X}}_{=:J_{in,\kappa}}
\end{align*}

For the last step, observe that since the function $F_{in,\kappa}$ is a smooth $\frac{\pi}{n}$-periodic function,
its Fourier series converges uniformly. We can thus write $F_{in,\kappa}(X)=\sum_{\ell\in\Z}c_\ell \e^{2in\ell X}$.
Using the relation $\e^{isX}g(D)\e^{-isX}=g(D-s)$, which holds for any $g\in C_{\rm b}(\R)$ and $s\in\R$, we obtain
\begin{align*}
I_{in,\kappa}&=\varsigma^2\sum_{\ell\in\Z}c_\ell G_n^+(D)\Bigl\{G_n^-\bigl(D-2n(\ell+2)\bigr)-G_n^-(D)\Bigr\} \e^{2in(\ell+2)X}, \\
J_{in,\kappa}&=\varsigma\sum_{\ell\in\Z}c_\ell \Bigl\{G_n^+(D-2n\ell)-G_n^+(D)\Bigr\}  \e^{2in(\ell+1)X}.
\end{align*}
By applying then formula \eqref{formula of tracen} one infers that
$\Trace_n(I_{in,\kappa})=0$ since the $0$-th Fourier coefficient of the corresponding function $b$ is obtained for $\ell=-2$, but the first factor vanishes
precisely when $\ell=-2$.
On the other hand one has
\begin{align*}
\Trace_n(J_{in,\kappa})&=\varsigma c_{-1}\frac{1}{2n}\int_0^{2n}\sum_{k\in\Z}
\Bigl\{G_n^+\bigl(\theta+2n(k+1)\bigr)-G_n^+\bigl(\theta+2nk)\bigr)\Bigr\}\;\!\d\theta\\
&=\varsigma c_{-1}\Bigl\{G_n^+(\infty)-G_n^{+}(-\infty)\Bigr\}\\
&=\varsigma c_{-1}(\e^{\pi n}-\e^{-\pi n}).
\end{align*}
Finally, by collecting the result obtained so far and by using the cyclicity of the traces one gets
\begin{align*}
& \Trace_n\big(\W_{in,\kappa;\frac{1}{2},0}^- \W_{in,\kappa;\frac{1}{2},0}^{-*}
- \W_{in,\kappa;\frac{1}{2},0}^{-*} \W_{in,\kappa;\frac{1}{2},0}^-\big) \\
& = \Trace_n\Big(\Xi_{in}(D)^{-1} \Xi_{\frac{1}{2}}(D)\Big(\W_{in,\kappa;\frac{1}{2},0}^- \W_{in,\kappa;\frac{1}{2},0}^{-*}
- \W_{in,\kappa;\frac{1}{2},0}^{-*} \W_{in,\kappa;\frac{1}{2},0}^-\Big)\Xi_{\frac{1}{2}}(D)^* \Xi_{in}(D)\Big) \\
& =  - \varsigma c_{-1}(\e^{\pi n}-\e^{-\pi n}).
\end{align*}

For the computation of $c_{-1}$ it is enough to observe that
\begin{align*}
F_{in,\kappa}(x) & = \frac{-1}{(1 -\varsigma \e^{- \pi n}\e^{2inx})(1-\varsigma \e^{\pi n}\e^{2inx})} \\
& = \overline{\varsigma}\e^{-\pi n} \e^{-2inx}\;\! \frac{1}{1-\varsigma \e^{-\pi n}\e^{2inx}}\ \frac{1}{1-\overline{\varsigma} \e^{-\pi n}\e^{-2inx}} \\
& = \overline{\varsigma}\e^{-\pi n} \e^{-2inx}\;\! \sum_{j=0}^\infty \big(\overline{\varsigma} \e^{-\pi n}\e^{2inx}\big)^j\;\!
\sum_{k=0}^\infty \big(\varsigma \e^{-\pi n}\e^{-2inx}\big)^k,
\end{align*}
from which one infers by considering the diagonal sum that
$$
c_{-1} = \overline{\varsigma}\e^{-\pi n} \sum_{j=0}^\infty \big(\e^{-\pi n}\big)^{2j} = \overline{\varsigma} \frac{\e^{-\pi n}}{1-\e^{-2\pi n}}
=\overline{\varsigma}\frac{1}{\e^{\pi n}-\e^{-\pi n}}\ .
$$
The second equality of the statement follows then directly.
\end{proof}

\section{The semi-Fredholm case (II)}\label{sec_semi_II}

In this section we consider the pair of operators $(H_{in,\kappa},H_{m',\kappa'})$
with $n>0$, $\kappa\in \T$, $m'\in(0,1)$ and $\kappa'\in\R$.
For such operators the wave operator $\W^-_{in,\kappa;m',\kappa'}$ is a partial isometry with an
infinite dimensional cokernel $\I_{\rm p}(H_{in,\kappa})L^2(\R)$ and a finite
dimensional kernel $\I_{\rm p}(H_{m',\kappa'})L^2(\R)$.

In order to deal with this situation, let us first introduce the $C^*$-subalgebra $\A_{n}$ of $C_{\rm b}(\R)$ defined by
\begin{equation*}
\A_{n}:=\Big\{f\in C_{b}(\R)\mid \exists f^{\pm}\in C_{\frac{\pi}{n}}(\R)\ \text{s.t.}\ \lim_{R\to+\infty}
\sup_{\pm x>R}|f(x)-f^{\pm}(x)|= 0\Big\}.
\end{equation*}
Note that for any $f\in \A_n$ the two functions $f^{\pm}$ are unique,
and therefore we can define a map $q_n:\A_{n}\to C_{\frac{\pi}{n}}(\R)\oplus C_{\frac{\pi}{n}}(\R)$
by setting $q_n(f):=(f^-,f^+)$. Obviously, $q_n$ is a surjective *-homomorphism with $\ker q_ n=C_0(\R)$.
Hence, for any invertible $f\in\A_{n}$, the quantity
\begin{equation*}
\wn_{\A_{n}}(f):=\big(\wn_n(f^-),\wn_n(f^+)\big)\in \Z\oplus\Z
\end{equation*}
is well-defined, and if $f,g\in \A_n$ are homotopic, then $\wn_{\A_{n}}(f) = \wn_{\A_{n}}(g)$.
It is now easily observed that the function $\SS_{in,\kappa;m',\kappa'}$ introduced in \eqref{eq_S}
belongs to $\A_n$. By a direct computation one gets
\begin{equation}\label{qS}
q_n\big(\SS_{in,\kappa;m',\kappa'}\big)
= \Big(\e^{-i\pi(in-m')}\frac{(1-\varsigma \e^{-\pi n }\e^{2in\cdot})}{(1-\varsigma \e^{\pi n}\e^{2in\cdot})},
\e^{-i\pi(in+m')}\frac{(1-\varsigma \e^{-\pi n}\e^{2in\cdot})}{(1-\varsigma \e^{\pi n}\e^{2in\cdot})} \Big)
\end{equation}
and $\wn_{\A_n}\bigl(\SS_{in,\kappa;m',\kappa'}\bigr)=(-1,-1)$.
Note that the equality of these two winding numbers comes from the very simple form of the function $\SS_{in,\kappa;m',\kappa'}$,
the algebra $\A_n$ contains also elements having two asymptotic functions $f^\pm$ with different winding numbers.

Let us then define the unital $C^*$-algebra $\E_{\A_n}$ by
\begin{equation*}
\E_{\!\A_n}:=C^*\Big(a(D)b(X)\mid a\in C_0\big([-\infty,+\infty)\big),b\in \A_n\Big)^+,
\end{equation*}
and the ideal $\J_{\!\A_n}$ by
\begin{equation*}
\J_{\!\A_n}:=C^*\Big(a(D)b(X)\mid a\in C_0 (\R),b\in \A_n\Big).
\end{equation*}
Recalling that $\K_\R\equiv \K\big(L^2(\R)\big)$ we obtain the $2$-link ideal chain
$$
\K_\R\subset \J_{\!\A_n}\subset\E_{\!\A_n}
$$
and the following list of quotients:
\begin{center}
\begin{tabular}{cc}
\hline
Quotients & Images of $a(D)b(X)$ under\\
&  the corresponding quotient maps\\
\hline
$\E_{\!\A_n}/\J_{\!\A_n}\cong \A_n$ & $a(-\infty) b(\cdot)$\\
$\E_{\!\A_n}/\K_\R\cong \E_n\oplus\A_n\oplus\E_n$ & $\big(a(D)b^{-}(X), a(-\infty)b(\cdot) , a(D)b^+(X)\big)$\\
$\J_{\!\A_n}/\K_\R\cong \J_n\oplus\J_n$ & $\big(a(D)b^{-}(X), a(D)b^+(X)\big)$\\
\hline
\end{tabular}
\end{center}
where $\E_n$ and $\J_n$ were introduced in \eqref{eq_En} and \eqref{eq_Jn} respectively.

For the present set of parameters the wave operator $\W^-_{in,\kappa;m',\kappa'}$
provided in \eqref{expression in line} clearly belongs to the algebra $\E_{\!\A_n}$.
In addition, if we denote by $q_{{}_{\J_{\!\A_n}}}$ the quotient map
$q_{{}_{\J_{\!\A_n}}}:\E_{\!\A_n}\to \A_n$ then one has
$q_{{}_{\J_{\!\A_n}}}(\W^-_{in,\kappa;m',\kappa'})=\SS_{in,\kappa;m',\kappa'}$.
On the other hand, one also has
$$
[\W^{-}_{in,\kappa;m',\kappa'},\W^{-*}_{in,\kappa;m',\kappa'}] = \I_{\rm p}(H_{m',\kappa'})-\I_{\rm p}(H_{in,\kappa})
\in \J_n+ \K_\R\subset \J_{\!\A_n}.
$$
These equalities mean that if one considers the index map related to the short exact sequence
$0 \to \J_{\!\A_n}\to \E_{\!\A_n}\to \A_n\to 0$, then
$$
\ind[\SS_{in,\kappa;m',\kappa'}]_1 = \big[\I_{\rm p}(H_{m',\kappa'})\big]_0-\big[\I_{\rm p}(H_{in,\kappa})\big]_0.
$$

In order to get a meaningful numerical equality, the images of the corresponding operators through additional
quotient maps have to be considered. More precisely, let us consider
the quotient map $q_{n}:\A_{n}\to C_{\frac{\pi}{n}}(\R)\oplus C_{\frac{\pi}{n}}(\R)$ with
$q_n\big(\SS_{in,\kappa;m',\kappa'}\big)$ already computed in \eqref{qS},
and the quotient map $q_{{}_{\K_{\R}}}:\J_{\!\A_n}\to \J_n\oplus\J_n$ with
$$
q_{{}_{\K_{\R}}}\big(\I_{\rm p}(H_{m',\kappa'})-\I_{\rm p}(H_{in,\kappa})\big) = \big(-\I_{\rm p}(H_{in,\kappa}),-\I_{\rm p}(H_{in,\kappa})\big).
$$
If we still define the map $\Trace_{\A_n}:\J_{\!\A_n}\to \C^2$ by
$\Trace_{\A_n}:=\big(\Trace_n\oplus\Trace_n\big)\circ q_{{}_{\K_\R}}$
one directly deduces from Theorem \ref{index theorem 4} the following statement:

\begin{theorem}\label{index theorem 6}
Let $n>0$, $\kappa\in \T$, $m'\in(0,1)$ and $\kappa'\in\R$. Then one has
\begin{equation*}
\wn_{\A_n}\big(\SS_{in,\kappa;m',\kappa'}\big)=(-1,-1)=\Trace_{\A_n}\big([\W^{-}_{in,\kappa;m',\kappa'},\W^{-*}_{in,\kappa;m',\kappa'}]\big).
\end{equation*}
\end{theorem}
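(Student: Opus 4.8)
The plan is to obtain both equalities as immediate consequences of Theorem \ref{index theorem 4}, read off through the two quotient maps $q_n$ and $q_{{}_{\K_{\R}}}$ attached to the $2$-link ideal chain $\K_\R\subset \J_{\!\A_n}\subset\E_{\!\A_n}$.

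For the winding number I would start from the explicit computation \eqref{qS} of $q_n\big(\SS_{in,\kappa;m',\kappa'}\big)=(f^{-},f^{+})$ and observe that both components $f^{-}$ and $f^{+}$ are nonzero constant multiples of $\SS_{in,\kappa;\frac{1}{2},0}$, since the proof of Theorem \ref{index theorem 4} gives $\SS_{in,\kappa;\frac{1}{2},0}(x)=i\e^{\pi n}\frac{1-\varsigma \e^{-\pi n}\e^{2inx}}{1-\varsigma \e^{\pi n}\e^{2inx}}$. As a constant phase does not change a winding number, $\wn_n(f^{\pm})=\wn_n\big(\SS_{in,\kappa;\frac{1}{2},0}\big)$, and the latter equals $-1$ by the first equality of Theorem \ref{index theorem 4}. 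Hence $\wn_{\A_n}\big(\SS_{in,\kappa;m',\kappa'}\big)=\big(\wn_n(f^{-}),\wn_n(f^{+})\big)=(-1,-1)$.

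For the trace side, the idea is to decompose the self-commutator $[\W^{-}_{in,\kappa;m',\kappa'},\W^{-*}_{in,\kappa;m',\kappa'}]=\I_{\rm p}(H_{m',\kappa'})-\I_{\rm p}(H_{in,\kappa})$ into a finite-rank part and a genuinely periodic part. By Theorem \ref{spectral theory}(i) the operator $H_{m',\kappa'}$ with $m'\in(0,1)$ and $\kappa'\in\R$ has only finitely many eigenvalues, so $\I_{\rm p}(H_{m',\kappa'})\in\K_\R$ and is killed by $q_{{}_{\K_{\R}}}$. Moreover, since the unitary identification of $L^2(\R_+)$ with $L^2(\R)$ used in Proposition \ref{prop_unit_eq} does not depend on the reference pair $(m',\kappa')$, the projection $\I_{\rm p}(H_{in,\kappa})$ is literally the same operator as in Section \ref{sec_semi_I}, namely $-\big[\W^{-}_{in,\kappa;\frac{1}{2},0},\W^{-*}_{in,\kappa;\frac{1}{2},0}\big]$; the proof of Theorem \ref{index theorem 4} exhibits it as a norm-convergent series of terms $a(D)b(X)$ with $a\in C_0(\R)$ rapidly decaying and $b\in C_{\frac{\pi}{n}}(\R)$, so it lies in $\J_n$ with $b^{-}=b^{+}=b$ and is therefore sent to the diagonal by $q_{{}_{\K_{\R}}}$. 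This yields $q_{{}_{\K_{\R}}}\big([\W^{-}_{in,\kappa;m',\kappa'},\W^{-*}_{in,\kappa;m',\kappa'}]\big)=\big(-\I_{\rm p}(H_{in,\kappa}),-\I_{\rm p}(H_{in,\kappa})\big)$, as recorded before the statement. Applying $\Trace_{\A_n}=(\Trace_n\oplus\Trace_n)\circ q_{{}_{\K_{\R}}}$ and using the second equality of Theorem \ref{index theorem 4}, namely $\Trace_n\big(-\I_{\rm p}(H_{in,\kappa})\big)=\Trace_n\big([\W^{-}_{in,\kappa;\frac{1}{2},0},\W^{-*}_{in,\kappa;\frac{1}{2},0}]\big)=-1$, in each of the two slots then gives $\Trace_{\A_n}\big([\W^{-}_{in,\kappa;m',\kappa'},\W^{-*}_{in,\kappa;m',\kappa'}]\big)=(-1,-1)$.

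I do not expect a serious obstacle: the statement is essentially a repackaging of Theorem \ref{index theorem 4} through the ideal chain. The two points deserving a word of care are that a constant phase is invisible to the winding number, and that the cokernel projection occurring here is exactly the periodic one of Section \ref{sec_semi_I}, so that $q_{{}_{\K_{\R}}}$ sends it to the diagonal; both follow from the identifications collected in Section \ref{sec_model}. The mildest subtlety is to make sure the finite-dimensional kernel contribution $\I_{\rm p}(H_{m',\kappa'})$ really drops out under $q_{{}_{\K_{\R}}}$, which is precisely the role of the finiteness of $\sigma_{\rm p}(H_{m',\kappa'})$ from Theorem \ref{spectral theory}.
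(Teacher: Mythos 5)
Your proposal is correct and follows essentially the same route as the paper: the paper also reads off $\wn_{\A_n}\big(\SS_{in,\kappa;m',\kappa'}\big)=(-1,-1)$ from \eqref{qS} (your observation that each component is a constant multiple of $\SS_{in,\kappa;\frac{1}{2},0}$ is just a clean way of phrasing that direct computation), and it likewise kills the finite-rank projection $\I_{\rm p}(H_{m',\kappa'})$ under $q_{{}_{\K_\R}}$, sends the periodic projection $\I_{\rm p}(H_{in,\kappa})$ to the diagonal, and invokes Theorem \ref{index theorem 4} through $\Trace_{\A_n}=(\Trace_n\oplus\Trace_n)\circ q_{{}_{\K_\R}}$. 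No gaps.
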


Unfortunately, the information about the kernel of $\W^{-}_{in,\kappa;m',\kappa'}$ is lost in the previous construction.
In order to recover this information, it seems necessary
to use a comparison operator, which is simpler and better known.
The aim of this construction is to hide the infinite dimensional cokernel.
In the present situation, the best choice seems to be $\W^{-}_{in,\kappa;\frac{1}{2},0}$ since this operator has already been studied
in the previous section and since it has a trivial kernel and a cokernel given by $\I_{\rm p}(H_{in,\kappa})L^2(\R)$.
We shall then call the \emph{relative index theorem of $\W^{-}_{in,\kappa;m',\kappa'}$ with respect to $\W^{-}_{in,\kappa;\frac{1}{2},0}$}
the index theorem for the partial isometry $\W^{-*}_{in,\kappa;\frac{1}{2},0} \W^{-}_{in,\kappa;m',\kappa'}$.
By using standard properties of the wave operators and the chain rule one infers that
$$
\W^{-*}_{in,\kappa;\frac{1}{2},0} \W^{-}_{in,\kappa;m',\kappa'}
= \W^{-}_{\frac{1}{2},0;in,\kappa} \W^{-}_{in,\kappa;m',\kappa'}
= \W^{-}_{\frac{1}{2},0;m',\kappa'}.
$$
The latter operator is a partial isometry with kernel $\I_{\rm p}(H_{m',\kappa'})L^2(\R)$ and trivial
cokernel. It is thus a Fredholm operator and its analysis can be performed in the setting of Section \ref{sec_Fredholm}.
We then get:

\begin{theorem}\label{thm_plugged}
For any $n>0$, $\kappa\in \T$, $m'\in(0,1)$ and $\kappa'\in\R$
one has
\begin{equation*}
q_{{}_{\K_\R}}\big( \W^{-*}_{in,\kappa;\frac{1}{2},0} \W^{-}_{in,\kappa;m',\kappa'}\big)
=\Gamma_{\frac{1}{2},0;m',\kappa'}^{\triangle}
\ \in C(\triangle)
\end{equation*}
with
$$
\Gamma_{\frac{1}{2},0;m',\kappa'}^{\triangle}
:= \Big(\e^{i\frac{\pi}{2}(m'-\frac{1}{2})}\Xi_{m'}(-\cdot)\Xi_\frac{1}{2}(\cdot), \SS_{\frac{1}{2},0;m'.\kappa'},
\e^{-i\frac{\pi}{2}(m'+\frac{1}{2})}\Xi_{-m'}(-\cdot)\Xi_{\frac{1}{2}}(\cdot) \Big).
$$
In addition,
\begin{equation*}
\wn_{\triangle}\big(\Gamma_{\frac{1}{2},0;m',\kappa'}^{\triangle}\big)=
-\text{number of eigenvalues of}\ H_{m',\kappa'}=
-\Index\big(\W_{\frac{1}{2},0;m',\kappa'}^{-}\big).
\end{equation*}
\end{theorem}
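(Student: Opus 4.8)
The plan is to reduce the statement to the already-established results of Section \ref{sec_Fredholm}, via the chain rule identity displayed just before the theorem. First I would record that $\W^{-*}_{in,\kappa;\frac{1}{2},0}\W^{-}_{in,\kappa;m',\kappa'} = \W^{-}_{\frac{1}{2},0;m',\kappa'}$, which follows from point (iii) of Proposition \ref{properties} (to cancel $\W^{-*}_{in,\kappa;\frac12,0}\W^{-}_{in,\kappa;\frac12,0}=\one_{\rm c}(H_{\frac12,0})=\one$, since $H_{\rm D}$ has no eigenvalues) together with the standard wave-operator composition rule $W^{-}_{\frac12,0;in,\kappa}W^{-}_{in,\kappa;m',\kappa'}=W^{-}_{\frac12,0;m',\kappa'}$. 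Concretely one also uses $\W^{-*}_{in,\kappa;\frac12,0}=\W^{-}_{\frac12,0;in,\kappa}$ on the continuous subspace; since the relevant operators are partial isometries with matching source/target projections, the composition is clean.

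Next, I would observe that the pair $(H_{\frac12,0},H_{m',\kappa'})$ with $m'\in(0,1)$, $\kappa'\in\R$ is \emph{almost} of the form treated in Section \ref{sec_Fredholm}, except that the roles of the two parameter pairs are swapped: there the reference operator $H_{\frac12,0}=H_{\rm D}$ sits in the second slot, whereas here it sits in the first slot. So rather than invoking Theorem \ref{index theorem 3} verbatim, I would re-derive the analogue of formula \eqref{eq_tri} for $\W^{-}_{\frac12,0;m',\kappa'}$ directly from the general expression \eqref{expression in line}: substitute $(m,\kappa)=(\frac12,0)$ and keep $(m',\kappa')$ general. Because $\varsigma(\frac12,0)=0$, the first parenthesis in \eqref{expression in line} collapses to $\Xi_{\frac12}(D)$ and the first scalar factor to $\e^{\mp i\pi/4}$, so $\W^{-}_{\frac12,0;m',\kappa'}$ simplifies considerably. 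One then checks that it lies in $\E_{(X,D)}$ and applies the quotient morphism $q_{{}_{\K_\R}}$ edge by edge. On the edge where $X\to-\infty$ the factors $\e^{2m'X}\to 0$, leaving $\Xi_{\frac12}(-\cdot)\Xi_{m'}(\cdot)$ times the scalar phase from \eqref{eq_limits}; on the edge where $X\to+\infty$ the dominant term is the $\varsigma'\e^{2m'X}\Xi_{-m'}(-D)$ piece, giving $\Xi_{\frac12}(-\cdot)\Xi_{-m'}(\cdot)$ with the conjugate phase; on the edge where $D\to-\infty$ one recovers the scattering function $\SS_{\frac12,0;m',\kappa'}$ computed via \eqref{eq_S}. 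Assembling the three pieces and matching phases using \eqref{eq_limits} with $m=\frac12$ yields exactly $\Gamma_{\frac12,0;m',\kappa'}^{\triangle}$ as stated. This is the computational heart of the first display; it is routine but requires care with the signs of the exponents $\frac{\pi}{2}(m'\pm\frac12)$ and with which term dominates at which end of $\R$.

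For the second display, I would argue: $\W^{-}_{\frac12,0;m',\kappa'}$ is an isometry (its source projection is $\one_{\rm c}(H_{m',\kappa'})=\one$, since $H_{m',\kappa'}$ with $m'\in(0,1)$, $\kappa'\in\R$ has only finitely many eigenvalues and hence a trivial... wait, one must be careful: $\one_{\rm c}(H_{m',\kappa'})$ is the projection onto the \emph{continuous} subspace, which is the complement of the finitely many eigenfunctions, so $\W^{-}_{\frac12,0;m',\kappa'}$ is a partial isometry with kernel $\I_{\rm p}(H_{m',\kappa'})L^2(\R)$ of dimension equal to the number of eigenvalues of $H_{m',\kappa'}$, and cokernel $\I_{\rm p}(H_{\rm D})L^2(\R)=\{0\}$). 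Therefore it is Fredholm with $\Index(\W^{-}_{\frac12,0;m',\kappa'})= -(\text{number of eigenvalues of }H_{m',\kappa'})$. The index-theorem input from Section \ref{sec_Fredholm} — specifically the statement that the analytic index of a Fredholm element of $\E_{(X,D)}$ equals minus the winding number of its symbol on $\triangle$, which is the content underlying Theorem \ref{index theorem 3} — then gives $\wn_{\triangle}(\Gamma_{\frac12,0;m',\kappa'}^{\triangle}) = -\Index(\W^{-}_{\frac12,0;m',\kappa'})$, and combining the two equalities finishes the proof.

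\textbf{Main obstacle.} The delicate point is not the $K$-theory but the explicit edge-by-edge evaluation of $q_{{}_{\K_\R}}$ on $\W^{-}_{\frac12,0;m',\kappa'}$: one must correctly identify, on the edge $x\to+\infty$, that the leading behaviour is governed by the $\varsigma'\,\e^{2m'X}\Xi_{-m'}(-D)$ term rather than the $\Xi_{m'}(-D)$ term, and then extract the correct limiting phase using \eqref{eq_limits}. A secondary subtlety is making sure the chain-rule identity $\W^{-*}_{in,\kappa;\frac12,0}\W^{-}_{in,\kappa;m',\kappa'}=\W^{-}_{\frac12,0;m',\kappa'}$ holds as an equality of operators on all of $L^2(\R)$ (not merely on a continuous subspace), which it does because the intermediate projection $\one_{\rm c}(H_{in,\kappa})$ coincides with the range projection of $\W^{-}_{in,\kappa;m',\kappa'}$ by Proposition \ref{properties}(iv), so no information is lost in the composition.
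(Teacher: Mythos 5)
Your overall route is the same as the paper's: the paper establishes this theorem precisely by the chain-rule reduction $\W^{-*}_{in,\kappa;\frac{1}{2},0}\W^{-}_{in,\kappa;m',\kappa'}=\W^{-}_{\frac{1}{2},0;in,\kappa}\W^{-}_{in,\kappa;m',\kappa'}=\W^{-}_{\frac{1}{2},0;m',\kappa'}$, followed by the observation that the latter is a partial isometry with kernel $\I_{\rm p}(H_{m',\kappa'})L^2(\R)$ and trivial cokernel, so that the Fredholm machinery of Section \ref{sec_Fredholm} applies. Your edge-by-edge evaluation of $q_{{}_{\K_\R}}$ (with $\varsigma=0$ collapsing the first two factors of \eqref{expression in line} to the scalar $\e^{-i\pi/4}$) does reproduce $\Gamma^{\triangle}_{\frac{1}{2},0;m',\kappa'}$ with the correct phases.

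There is, however, one genuine error in your final assembly. You correctly identify that $\W^{-}_{\frac{1}{2},0;m',\kappa'}$ has kernel of dimension $N:=$ number of eigenvalues of $H_{m',\kappa'}$ and trivial cokernel, but you then assert $\Index\big(\W^{-}_{\frac{1}{2},0;m',\kappa'}\big)=-N$. With the convention $\Index(T)=\dim\ker T-\dim\operatorname{coker}T$ used in Section \ref{sec_Fredholm} (there the isometry $\W^{-}_{m,\kappa;\frac{1}{2},0}$, with trivial kernel and $N$-dimensional cokernel, has index $-N$), the present operator has index $+N$. Combining your value $-N$ with the index theorem $\wn_{\triangle}=-\Index$ would give $\wn_{\triangle}\big(\Gamma^{\triangle}_{\frac{1}{2},0;m',\kappa'}\big)=+N$, contradicting the claimed equality $\wn_{\triangle}=-N$; the fix is simply the correct sign $\Index=+N$. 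A smaller caveat: your identification of the third edge relies on the term $\varsigma'\e^{2m'X}\Xi_{-m'}(-D)$ dominating as $x\to+\infty$, which requires $\kappa'\neq 0$; for $\kappa'=0$ the third component coincides with the first, as in \eqref{eq_tri_2} (a case the paper's own statement of the theorem also glosses over).
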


\section{The almost periodic case}\label{sec_ap}

In this section we consider the pairs of parameters $(in,\kappa)$ and $(in',\kappa')$ with $n,n'\in (0,\infty)$ and $\kappa,\kappa'\in \T$.
Even though the kernel and the cokernel of the wave operator $\W^-_{in,\kappa;in',\kappa'}$ are infinite-dimensional in this case,
we shall exhibit an index theorem by using a $C^*$-algebras generated by \emph{almost periodic pseudo-differential operators}.

Let us start by recalling some basic facts about almost periodic functions and pseudo-differential operators
with almost periodic coefficients based on \cite[Sec.~1]{Shubin1}.
A continuous function $f:\R\to\C$ is said to be \emph{uniformly almost periodic} if any sequence of translations
$\{f(\cdot+\alpha_k)\}_{k=1}^{\infty}$ contains a subsequence that is a uniformly convergent on $\R$.
Then, the set $\CAP(\R)$ of all uniformly almost periodic functions coincides with the $C^*$-subalgebra of $C_{\rm b}(\R)$
generated by exponentials $\{\e^{i\lambda\cdot}\}_{\lambda\in\R}$.

For any $f\in\CAP(\R)$, the limit
\begin{equation*}
M(f):=\lim_{T\to+\infty}\frac{1}{2T}\int_{-T}^{T}f(x)\;\!\d x
\end{equation*}
exists and is called \emph{the mean value} of $f$.
Note that $M(f)=\frac{1}{L}\int_{0}^{L}f(x)\;\!\d x$ holds for any  continuous periodic function $f$ with a period $L>0$.
Moreover, the completion of $\CAP(\R)$ with respect to the norm associated with the inner product defined by $(f,g)\mapsto M(\overline{f}g)$
is called \emph{the space of Besicovitch a.p.~functions} and is denoted by $B^2(\R)$.
Since $\{\e^{i\lambda\cdot}\}_{\lambda\in\R}$ is an orthonormal basis for $B^2(\R)$,
$B^2(\R)$ is a non-separable Hilbert space.

The dual group $\R_{\Bohr}$ of $\R_{\discrete}$ (the group $\R$ with the discrete topology)
is called \emph{the Bohr compactification of}  $\R$.
$\R_{\Bohr}$ is a compact commutative group containing $\R$ as a dense subgroup. There exists then an isomorphism
\begin{equation*}
\CAP(\R)\cong C(\R_{\Bohr}).
\end{equation*}
In addition, if $\mu_{\Bohr}$ denotes the normalized Haar measure on $\R_{\Bohr}$, then one has for any $f\in\CAP(\R)$,
\begin{equation*}
M(f)=\int_{\R_{\Bohr}}f \;\!\d\mu_{\Bohr},
\end{equation*}
where the function $f$ in the right-hand side stands for the extension of $f$ by continuity on $\R_{\Bohr}$.
There is then a natural isomorphism
$$
L^2(\R_{\Bohr})\equiv L^2(\R_{\Bohr},\d\mu_{\Bohr})\cong B^2(\R).
$$

We now recall the topological invariant of invertible almost periodic functions inspired by \cite[Sec.~2]{CDSS},
which is a natural generalization of the winding number for invertible elements of $C(\S)$.
Let $f\in \CAP(\R)$ be invertible and write it as $f(x)=|f(x)|e^{i\sigma(x)}$ for some continuous function $\sigma:\R\to\R$.
Then the limit
\begin{equation*}
\wn_{\ap}\bigl(f\bigr):=\lim_{T\to+\infty}\frac{\sigma(T)-\sigma(-T)}{2T}
\end{equation*}
exists, and defines a homotopy invariant on the set of all invertible elements in $\CAP(\R)$.
Similar to the usual winding numbers, $\wn_{\ap}$ defines a group homomorphism
from the set of all invertible elements in $\CAP(\R)$ to $\R$.
Note that if $f$ is $L$-periodic for some $L>0$, then one obtains $\wn_{\ap}\bigl(f\bigr)=2\pi L^{-1}\wn_{L}\bigl(f\bigr)$.

Let us now construct a $C^*$-algebraic framework for treating the wave operator $\W^-_{in,\kappa;in',\kappa'}$.
Let $\E_{\ap}$ be the $C^*$-subalgebra defined by
$$
\E_{\ap}:=\Big(a(D)b(X)\mid a\in C_0\big([-\infty,+\infty)\big), b\in\CAP(\R) \Big)^+.
$$
As for the other cases, $\E_{\ap}$ is an extension of $\CAP(\R)$ by the ideal
$$
\J_{\ap}:= \Big(a(D)b(X)\mid a\in C_{0}(\R), b\in \CAP(\R) \Big).
$$
By looking at the explicit expression for $W^{-}_{in,\kappa;in',\kappa'}$ provided in \eqref{expression in line}
one directly infers that this operator belongs to $\E_{\ap}$ for any $n,n'\in (0,\infty)$ and $\kappa,\kappa'\in \T$.

To obtain a trace on the ideal $\J_{\ap}$ we recall the construction of a specific II${}_{\infty}$-factor $\A_{\Bohr}$
and its relations to pseudo-differential operators with almost periodic coefficients, as introduced in \cite[Sec.~3.3]{Shubin1}.
Let us consider the Hilbert space
$$
\H_{\Bohr}:=B^2(\R)\otimes L^2(\R)\cong L^2(\R_{\Bohr}\times\R),
$$
and for any $\xi \in \R$ let $T_\xi$ is the translation given by $T_{\xi}f:=f(\cdot-\xi)$ for $f\in L^2(\R)$ or $f\in B^2(\R)$.
The factor $\A_{\Bohr}$ is defined to be the von Neumann algebra on $\H_{\Bohr}$ generated by two families of operators
\begin{equation*}
\{e^{i\xi X}\otimes e^{i\xi Y}\mid \xi\in\R\}\ \text{and}\ \{\one\otimes T_\xi \mid \xi\in\R\}.
\end{equation*}
Then, it is known that the commutant $\A_{\Bohr}'$ is generated by
\begin{equation*}
\{T_{\xi}\otimes T_{-\xi}\mid \xi\in\R\}\ \text{and}\ \{e^{i\xi X}\otimes \one \mid \xi\in\R\},
\end{equation*}
and that $\A_{\Bohr}$ is indeed a II$_{\infty}$-factor.

For $m\leq 0$ let us now denote by $\APS^m$ the set of usual symbols $(x,\xi)\mapsto a(x,\xi)$ of degree $m$ which are uniformly almost periodic in the first variable,
namely $\partial_\xi^\alpha \partial_x^\beta a(\cdot,\xi)\in\CAP(\R)$ for all $\xi\in \R$, $\alpha,\beta \in \N$.
The set of corresponding pseudo-differential operators $a(X,D)$ is denoted by $\APL^m$.
Then, we can define a map ${}^{\#}: \APL^0\to \A_{\Bohr}$ given by
\begin{equation*}
a(X,D)^{\#}:=a(X+Y,D_y),
\end{equation*}
where in the r.h.s.~the operator $X$ refer to the multiplication by the variable in $\R_{\Bohr}$ while
$Y$ and $D_y$ refer to the variable on $\R$.
Note that the map ${}^{\#}$ is an injective *-homomorphism from $\APL^0$ to $\A_{\Bohr}$.
Then, for any symbol $a$ of degree $m<-1$ an explicit trace formula is proposed in \cite[Prop.~3.3]{Shubin1}~:
\begin{equation}\label{eq_Trace}
\Trace_{\Bohr}\big(a(X,D)^\#\big):=(2\pi)^{-1}\int_{\R_{\Bohr}\times \R} a(x,\xi)\;\!\d \mu_{\Bohr}(x) \;\!\d \xi.
\end{equation}

Motivated by the above expression, let us now define a suitable trace $\Trace_{\ap}$ on $\J_{\ap}$.
For that purpose, let us consider $a\in C_0(\R)$ with $|a(\xi)|\leq C\langle\xi\rangle^{-1-\varepsilon}$ for some $C,\varepsilon>0$ and all $\xi \in \R$,
and let $b\in \CAP(\R)$. Clearly $a(D)b(X)$ belongs to $\J_{\ap}$. We then set
\begin{equation*}
\Trace_{\ap}\big(a(D)b(X)\big)=\int_{\R}a(\xi)\;\!\d\xi \times M(b).
\end{equation*}
Note that the normalization used in this expression is not the same as in \eqref{eq_Trace}

Before stating our last index theorem, observe that if $a(D)b(X)\in\J_n\subset\J_{\ap}$, then
the trace $\Trace_{\ap}$ and the trace $\Trace_n$ are connected.
Indeed, for $a\in C_0(\R)$ with $|a(\xi)|\leq C\langle\xi\rangle^{-1-\varepsilon}$ for some $C,\varepsilon>0$ and all $\xi \in \R$,
and for $b\in C_{\frac{\pi}{n}}(\R)$ one has
\begin{align*}
\Trace_{\ap}\big(a(D)b(X)\big) & =\int_{\R}a(\xi)\;\!\d\xi \times M(b) \\
& = \int_{\R}a(\xi)\;\!\d\xi \times \frac{n}{\pi}\int_0^\frac{\pi}{n} b(x)\;\!\d x \\
& = 2n \Trace_n \big(a(D)b(X)\big)
\end{align*}
with $\Trace_n \big(a(D)b(X)\big)$ computed in \eqref{formula of tracen}.
With the help of these equalities one can now easily get:

\begin{theorem}\label{index theorem 5}
Let $n,n'\in (0,\infty)$ and $\kappa,\kappa'\in \T$. Then one has
\begin{equation}\label{eq_ap}
\wn_{\ap}\big(\SS_{in,\kappa;in',\kappa'}\big)=-2(n-n')=\Trace_{\ap}\big([\W^{-}_{in,\kappa;in',\kappa'},\W^{-\ast}_{in,\kappa;in',\kappa'}]\big).
\end{equation}
\end{theorem}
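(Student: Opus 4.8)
The plan is to establish the two equalities in \eqref{eq_ap} separately, exploiting the fact that $\W^-_{in,\kappa;in',\kappa'}$ factors through the periodic and asymptotically periodic cases already treated. First I would compute $\wn_{\ap}\big(\SS_{in,\kappa;in',\kappa'}\big)$ directly from the explicit formula \eqref{eq_S}. Writing
\begin{equation*}
\SS_{in,\kappa;in',\kappa'}(x) = \e^{-i\pi(in-in')}\;\frac{1-\varsigma \e^{i\pi (in)}\e^{2inx}}{1-\varsigma \e^{-i\pi (in)}\e^{2inx}}\;\frac{1-\varsigma' \e^{-i\pi (in')}\e^{2in'x}}{1-\varsigma' \e^{i\pi (in')}\e^{2in'x}},
\end{equation*}
the prefactor $\e^{-i\pi(in-in')}=\e^{\pi(n-n')}$ is a positive constant and contributes nothing to $\wn_{\ap}$. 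Each of the two remaining quotients is $\frac{\pi}{n}$- (respectively $\frac{\pi}{n'}$-) periodic, so by the periodic-to-almost-periodic comparison $\wn_{\ap}(f)=2\pi L^{-1}\wn_L(f)$ recalled before the theorem, together with the computation $\wn_{\pi/n}\big(\SS_{in,\kappa;\frac{1}{2},0}\big)=-1$ from the proof of Theorem \ref{index theorem 4}, the first factor has $\wn_{\ap}=2\pi\cdot\frac{n}{\pi}\cdot(-1)=-2n$. The second factor is, up to the harmless constant, the complex conjugate of the analogous $n'$-periodic object, hence has $\wn_{\ap}=+2n'$. Adding, $\wn_{\ap}\big(\SS_{in,\kappa;in',\kappa'}\big)=-2n+2n'=-2(n-n')$. (Alternatively, I would use that $\wn_{\ap}$ is a group homomorphism on invertibles and the chain rule $\SS_{in,\kappa;in',\kappa'}=\SS_{in,\kappa;\frac{1}{2},0}\cdot\overline{\SS_{in',\kappa';\frac{1}{2},0}}$ up to a positive scalar, reducing everything to Theorem \ref{index theorem 4}.)

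For the second equality I would compute $\Trace_{\ap}\big([\W^-_{in,\kappa;in',\kappa'},\W^{-*}_{in,\kappa;in',\kappa'}]\big)$. By \eqref{projections} in its $L^2(\R)$ incarnation, this commutator equals $\I_{\rm p}(H_{in',\kappa'})-\I_{\rm p}(H_{in,\kappa})$. The key point is that each of these two projections has already been analyzed: from the proof of Theorem \ref{index theorem 4}, $-\I_{\rm p}(H_{in,\kappa}) = [\W^-_{in,\kappa;\frac{1}{2},0},\W^{-*}_{in,\kappa;\frac{1}{2},0}]$ lies in $\J_n\subset\J_{\ap}$, and similarly $-\I_{\rm p}(H_{in',\kappa'})\in \J_{n'}\subset\J_{\ap}$. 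Using the conjugation trick of that proof (conjugating by the unitary $\Xi_{\frac{1}{2}}(D)$ and the invertible $\Xi_{in}(D)$, which does not change the trace by cyclicity), one rewrites $\I_{\rm p}(H_{in,\kappa})$ in the normal form $\sum_\ell a_\ell(D)b_\ell(X)$ with $a_\ell$ of fast decay, and the formula $\Trace_{\ap}\big(a(D)b(X)\big)=\int_\R a(\xi)\,\d\xi\times M(b)$ applies. But for an element already in $\J_n$ one has $M(b)=\frac{n}{\pi}\int_0^{\pi/n}b$, so by the displayed comparison $\Trace_{\ap} = 2n\,\Trace_n$ on $\J_n$. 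Hence $\Trace_{\ap}\big(-\I_{\rm p}(H_{in,\kappa})\big) = 2n\,\Trace_n\big([\W^-_{in,\kappa;\frac{1}{2},0},\W^{-*}_{in,\kappa;\frac{1}{2},0}]\big) = 2n\cdot(-1) = -2n$ by Theorem \ref{index theorem 4}, and likewise $\Trace_{\ap}\big(-\I_{\rm p}(H_{in',\kappa'})\big) = -2n'$. Therefore
\begin{equation*}
\Trace_{\ap}\big([\W^-_{in,\kappa;in',\kappa'},\W^{-*}_{in,\kappa;in',\kappa'}]\big) = \Trace_{\ap}\big(\I_{\rm p}(H_{in',\kappa'})\big) - \Trace_{\ap}\big(\I_{\rm p}(H_{in,\kappa})\big) = 2n' - 2n = -2(n-n'),
\end{equation*}
which together with the first part completes the proof.

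The main obstacle I anticipate is not the bookkeeping but the justification that $\Trace_{\ap}$ is well-defined and trace-like on the relevant (non-trace-class in the ordinary sense) elements, and that the two computations of it agree — i.e.\ that $\Trace_{\ap}$ restricted to $\J_n$ genuinely equals $2n\,\Trace_n$ and that the conjugation by $\Xi_{\frac{1}{2}}(D)$ and $\Xi_{in}(D)$ is legitimate under $\Trace_{\ap}$. Here one should check that $\I_{\rm p}(H_{in,\kappa})$, being a projection onto the span of the eigenfunctions whose eigenvalues accumulate only at $0$ and $+\infty$ with finite density, indeed lands in the domain where the $\Trace_{\ap}$ formula converges (the $a_\ell$ have the required decay $|a_\ell(\xi)|\leq C\langle\xi\rangle^{-1-\varepsilon}$, inherited from the exponential decay of $G_n^\pm(\xi)-G_n^\pm(\pm\infty)$ seen in the proof of Theorem \ref{index theorem 4}), and that cyclicity holds for these finite sums of products $a(D)b(X)$ with $a$ of fast decay. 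Once this analytic point is secured, everything reduces to Theorem \ref{index theorem 4} via the scaling factor $2n$, and the result follows.
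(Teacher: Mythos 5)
Your proposal is correct and follows essentially the same route as the paper: the second equality is obtained exactly as in the paper's proof, by writing the commutator as $\I_{\rm p}(H_{in',\kappa'})-\I_{\rm p}(H_{in,\kappa})$ with the two projections in $\J_{n'}$ and $\J_n$, invoking $\Trace_{n}\big(\I_{\rm p}(H_{in,\kappa})\big)=1$ from Theorem \ref{index theorem 4} and the scaling relation $\Trace_{\ap}=2n\,\Trace_n$ on $\J_n$. For the first equality the paper simply rewrites $\SS_{in,\kappa;in',\kappa'}$ as $\overline{\varsigma}\varsigma'\e^{-2i(n-n')x}$ times factors of the form $z/\bar z$ with bounded argument, which reads off $-2(n-n')$ directly; your factorization through the two periodic scattering functions and the homomorphism property of $\wn_{\ap}$ is an equivalent, equally valid computation.
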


\begin{proof}
The expression for the function $\SS_{in,\kappa;in',\kappa'}$ has been provided in \eqref{eq_S}
and for any $x\in \R$ one has
\begin{align*}
\SS_{in,\kappa;in',\kappa'}(x)
& = \e^{\pi(n-n')}\frac{1-\varsigma \e^{-\pi n}\e^{2inx}}{1-\varsigma \e^{\pi n}\e^{2inx}}
\ \frac{1-\varsigma' \e^{\pi n'}\e^{2in'x}}{1-\varsigma' \e^{-\pi n'}\e^{2in'x}} \\
& = \overline{\varsigma}\varsigma' \e^{-2i(n-n')x}\frac{1-\varsigma \e^{-\pi n}\e^{2inx}}{1-\overline{\varsigma} \e^{-\pi n}\e^{-2inx}}
\ \frac{1-\overline{\varsigma}' \e^{-\pi n'}\e^{-2in'x}}{1-\varsigma' \e^{-\pi n'}\e^{2in'x}}.
\end{align*}
Clearly this function belongs to $\CAP(\R)$ and one infers from the above expression the first equality in \eqref{eq_ap}.

Recall now from Sections \ref{sec_model} and \ref{sec_semi_I} that
$$
[\W^{-}_{in,\kappa;in',\kappa'},\W^{-\ast}_{in,\kappa;in',\kappa'}] = \I_{\rm p}(H_{in',\kappa'})-\I_{\rm p}(H_{in,\kappa})
$$
with $\I_{\rm p}(H_{in',\kappa'})\in \J_{n'}$ and $\I_{\rm p}(H_{in,\kappa})\in \J_n$.
In addition, since $\Trace_{n'} \big(\I_{\rm p}(H_{in',\kappa'})\big)=1$ and
$\Trace_{n} \big(\I_{\rm p}(H_{in,\kappa})\big)=1$, we infer from the relation between the traces $\Trace_{n'}$
(or $\Trace_{n}$) and $\Trace_{\ap}$ mentioned before the statement that
$$
\Trace_{\ap} \big(\I_{\rm p}(H_{in',\kappa'})\big) = 2n'
\quad \hbox{ and }\quad
\Trace_{\ap} \big(\I_{\rm p}(H_{in,\kappa})\big)=2n.
$$
The second equality of the statement follows directly from this computation.
\end{proof}

\begin{remark}\label{rem_dens}
If we just look at the expression for the wave operators provided in \eqref{expression in line}
nothing special can be said about their kernel or cokernel. However, if we remember that
these subspaces correspond to the subspaces generated by the eigenfunctions of $H_{in',\kappa'}$ and $H_{in,\kappa}$ respectively,
then one more relation can be established.
For that purpose, let us set $\lambda_{in,\kappa,0}:=-4\exp(-\arg(\varsigma)/n)$ for $n>0$ and $\kappa\in \T$,
which corresponds to one of the eigenvalues of $H_{in,\kappa}$.
As shown in Theorem \ref{spectral theory}.(ii), the set of all eigenvalues of $H_{in,\kappa}$ is then given by
$\{\lambda_{in,\kappa,0}\e^{\frac{2\pi}{n} j}\mid j\in \Z\}$.
Now, let us define the function $N_{in,\kappa}:\R_+\to\N$ for $T>0$ by
\begin{equation*}
N_{in,\kappa}(T):=\#\Big\{\text{eigenvalues of}\ H_{in,\kappa}\ \text{in the interval}\ [-4\e^{2\pi T},-4\e^{-2\pi T}]\Big\}
\end{equation*}
We then infer that
\begin{equation*}
N_{in,\kappa}(T)=2nT+O(1)
\end{equation*}
which directly implies that $\frac{N_{in,\kappa}(T)}{N_{in',\kappa'}(T)}\to\frac{n}{n'}$ when $T\to+\infty$.
Thus, the relative density of eigenvalues for the pair of operator $(H_{in,\kappa},H_{in',\kappa'})$ also
corresponds to the number $\frac{n}{n'}$, in the sense introduced above.
\end{remark}

\end{document}